\numberwithin{equation}{section}
\newtheorem{thm}{Theorem}[section]
\newtheorem{lem}[thm]{Lemma}
\theoremstyle{definition}
\newtheorem{dfn}[thm]{Definition}
\theoremstyle{remark}
\newtheorem{rem}[thm]{Remark}
\numberwithin{equation}{section}
\begin{document}
\title[Born-Oppenheimer approximation for an atom]{Born-Oppenheimer approximation for an atom in constant magnetic fields}
\author[Sohei Ashida]{Sohei Ashida}
\address{Department of Mathematics\\
Graduate School of Science\\
Kyoto University\\
Kyoto\\
606 Japan}
\email{ashida@math.kyoto-u.ac.jp}

\begin{abstract}
We obtain a reduction scheme for the study of the quantum evolution of an atom in constant magnetic fields using the method developed by Martinez, Nenciu and Sordoni based on the construction of almost invariant subspace. In Martinez-Sordoni \cite{MaSo2} such a case is also studied but their reduced Hamiltonian includes the vector potential terms. In this paper, using the center of mass coordinates and constructing the almost invariant subspace different from theirs, we obtain the reduced Hamiltonian which does not include the vector potential terms. Using the reduced evolution we also obtain the asymptotic expantion of the evolution for a specific localized initial data, which verifies the straight motion of an atom in constatnt magnetic fields.
\end{abstract}

\maketitle
\section{Introduction}\label{firstsec}
In this paper we study the evolution of the system of one nucleus and some electrons which is neutral, i.e., the total electric charge is zero, in constant magnetic fields and external electric potentials. We construct the reduction scheme of the evolution of the system  to the evolution of center of mass of the particles with small error terms which is regarded as Born-Oppenheimer approximation. It is shown that the reduced evolution of the center of mass does not depend on the magnetic field, which implies the straight motion of the center of mass when the external potential does not exists.

The Hamiltonian of an atom with a nucleus and $N$ electorns moving in constant magnetic fields has the form
\begin{multline}\label{myeq1.1}
\hat P=\frac{1}{2m}(D_{x_1}-e_1A(x_1))^2+\sum_{i=2}^{N+1}\frac{1}{2m_e}(D_{x_i}-eA(x_i))^2\\
+\sum_{i<j}V_{ij}(x_i-x_j)+\sum_{i=1}^{N+1}V_i(x_i).
\end{multline}
Here $x_1 \in \mathbb{R}^3$ (resp., $m$) denotes the position (resp., the mass) of the nucleus, $x_j,j\geq2$ (resp., $m_e$) denote the position (resp., the mass)  of electrons and $V_{ij}$ (resp., $V_i$) are interaction (resp., external) potentials. $e_1$ (resp., $e$) denotes the charge of the nucleus (resp., electrons) and $A$ denotes the vector potential. 

There is a lot of literature on the quantum many body problem. Especially the large time asymptotics of the solution of the Schr\"odinger equation has been studied intensively and the complete classification of the asymptotic behavior called asyptotic completeness was proved for a large variety of potentials (see Derezi\'nski-G\'erard \cite{DeGe}). We do not deal with such a problem here, but we study the phenomenon which happens when there are differences of the mass of the particles instead. Roughly speaking the Born-Oppenheimer approximation saiys the following. Since the electrons are lighter than the nuclei, they move rapidly and adjust their state adiabatically as the nuclei moves more slowly. To develop the mathematical theory of the Born-Oppenheimer approximation, the ratio of electronic and nuclear mass is denoted by $h^2$ and regarded as a small parameter. Before we consider the case with magnetic fields, it is instructive to recall the results for the case without magnetic fields. In this case the Hamiltonian of some nuclei and electrons is written as 
$$P(h)=-h^2\Delta_x - \Delta_{y}+V(x,y),$$
where we denote the coordinates of the electrons by $y$ and that of the nuclei by $x$. Our purpose is to study the asymptotics of the solution of the Schr\"odinger equation $ih\partial_t \varphi =P(h)\varphi$ as $h\to 0$. From the intuitive description of the Born-Oppenheimer approximation above if the electrons are in bound states for the fixed nuclei, i.e. the bound states for $P_e(x):=-\Delta_{y}+V(x,y)$ at the initial time, we expect the electrons remain in the bound states even after time passes. This suggests there is an almost invariant subspace close to the electronic bound states under the evolution $e^{-itP(h)}$.

The almost invariant subspaces are described by the projections (see Nenciu \cite{Ne1, Ne2}). If an orthogonal projection $\Pi$ satisfies $[P(h),\Pi]=\mathcal O(h^{\infty})$, then $e^{-itP}\Pi=\Pi e^{-itP}+\mathcal O(h^{\infty}\lvert t \rvert)$ holds which means $\mathrm{Ran}\Pi$ is the almost invariant subspace. We expect there exists such a projection $\Pi$ such that $\Pi-\Pi_0=\mathcal O(h)$ where $\Pi_0=\int^{\oplus}\Pi_0(x)dx$ and $\Pi_0(x)$ is the spectral projection onto an arbitrarily chosen part of the discrete spectrum of $P_e(x)$ separated from the other part of the spectrum. In Nenciu \cite{Ne1, Ne2} the projections onto almost invariant subspaces were constructed by recurrence formula and another construction by pseudodifferential calculus was introduced by Helffer-Sj\"ostrand \cite{HeS} and Sj\"ostrand \cite{Sj}. In the case of Born-Oppenheimer approximation the almost invariant subspace does not seem to exist according to the physical intuition saying that the adiabatic decouppling becomes weaker and weaker when the energy increases. However for any cutoff function $\chi$ a projection $\Pi$ which satisfies $[P(h),\Pi]\chi(P)=\mathcal O(h^{\infty})$ is constructed by Sordoni \cite{So}. Using the projection $\Pi$ the quantum evolution $e^{-itP/h}$ of the molecule is reduced to the evolution of the nuclei $e^{-itG/h}$ where $G$ is a $k\times k$ matrix of semiclassical pseudodifferential operators $H^2(\mathbb R^n_x)\to L^2(\mathbb R^n_x)$, of the nuclei-variables, $k$ being the rank of $\Pi_0$ (see Martinez-Sordoni \cite{MaSo1, MaSo2}). The symbol of G is written as $g=\xi^2I_k+\mu(x)+\sum_{j=1}^{\infty}g_jh^j$, where $I_k$ is the $k$-dimentional identity matrix and $\mu(x)$ is a matrix of $\Pi_0P_e(x)$ in a basis of $\mathrm{Ran}\Pi_0$.

Let $f$ be a generalized coherent state of nuclei and $\psi$ be a bound state of electrons. The initial state of the form $f(x)\psi(x,y)+\mathcal O(h)$ is used by Hagedorn (see Hagedorn \cite{Ha1, Ha2, Ha3} and Hagedorn-Joye \cite{HaJo}). In semiclassical limit the quantum evolution of generalized coherent states admits an asymptotic expantion each term of which is a generalized coherent state centered at the point reached by the classical flow (see Combescure-Robert \cite{CoRo} and Combescure \cite{Co}). In \cite{MaSo2} Martinez and Sordoni applied this expantion to $G$ in order to obtain the evolution like that of Hagedorn.

We now return to the case with constant magnetic fields. Its coordinates perpendicular to the magnetic field stay in a bounded region as in the classical mechanics. When there are $N$ particles, the center of mass of the particles does not move freely, so that we cannot devide the motion of the particles into the internal and external motion. However, if the total charge of the particles is zero, there is a subspace $\mathcal H_{bound}$ of $L^2(\mathbb R^{3N})$ such that internal coordinates of the particles in the state in $\mathcal H_{bound}$ stay in a bounded region and the particles travel to infinity across the magnetic field (see G\'erard-\L aba \cite{GeLa}). This would occur from the viewpoint of the calssical mechanics because the Lorentz force on the particles with the opposite charge has the opposite direction and the particles interact with the interaction potential. Thus we expect when there is an atom, the electrons around the nucleus offset the influence of the magnetic fields. Hence it seems to be natural that the vector potential terms do not appear in the reduced Hamiltonian $G$. The Born-Oppenheimer approximation with magnetic fields is also dealt with by Martinez-Sordoni \cite{MaSo2} but in their construction the term $h^2(D_x-eA(x))^2$ remaines in $G$. This is because in thier coordinates where independent variables are positions of the particles, the effects of magnetic fields on the nuclei and the electrons are treated separately and therefore even if we restrict the Hamiltonian $P$ on $\mathrm{Ran}\Pi_0$, the effect of magnetic fields does not disappear. Our purpose is to construct a reduction scheme from $\hat P$ in \eqref{myeq1.1} to $G$ without vector potential terms.

To obtain such $G$ we change the coordinates $(x_1,\dots,x_{N+1})$ to the new coordinates where independent variables are the center of mass $x$ and the relative position $y=(y_2,\dots,y_{N+1})$ of the electrons from the nucleus. Then the Hamiltonian $\hat P$ is transformed into a certain operator $\tilde P$. Next we transform $\tilde P$ by a unitary transformation $\mathcal V=\exp\left(-ieA(x)\sum_{i=2}^{N+1}y_i\right)$ as $P:=\mathcal V\tilde P\mathcal V^*$. The transformed Hamiltonian is written in the following form:
$$P=h^2D_x^2-4h^2e\sum_{i=2}^{N+1}A(y_i)D_x+\sum_{i=2}^{N+1} L_i ^2+h^2Q+V(x,y).$$
Here $L_i=D_{y_i}-eA(y_i)$, the potential $V(x,y)$ is devided as $V(x,y)=V_0(x,y)+V'(x,y)$, where $V_0$ is the zeroth order term with respect to $h$, and $Q$ is a certain operator on $L^2(\mathbb R^{3N}_y)$. We regard $P_e(x)=\sum_{i=2}^{N+1}L_i^2+V_0(x,y)$ as the electric Hamiltonian and denote the spectral projection onto the chosen part of the discrete spectrum of $P_e(x)$ by $\Pi_0(x)$.

Our main results, Theorems \ref{invariant0}-\ref{coherent} in Section \ref{secondsec} can be illustrated as follows. For any cutoff function $\chi\in C_0^{\infty}(\mathbb R)$ there exists an orthogonal projection $\Pi(h)$ onto the almost invariant subspace close to $\Pi_0$, that is $[P,\Pi]\chi(P)=\mathcal O(h^{\infty})$ and $\Pi-\Pi_0=\mathcal O(h)$. There also exists a reducing transformation
$$W:L^2(\mathbb R^{3(N+1)})\to (L^2(\mathbb R^3_x))^{\oplus k},$$
where $k=\mathrm{Rank \Pi_0}$, and reduced Hamiltonian $G$ which is $k\times k$ self-adjoint matrix of operators on $L^2(\mathbb R^3_x)$ without vector potential term such that the restriction $U$ of $W$ to $\mathrm{Ran}\Pi$ is a unitary operator satisfying
$$e^{-itP/h}\Pi\varphi_0=U^*e^{-itG/h}U\Pi\varphi_0+\mathcal O(\lvert t\rvert h^{\infty}\lVert \varphi_0\rVert),$$
for $\varphi_0\in\mathrm{Ran}\chi(P)$. This means that the motions of all the particles are reduced to the motion of the center of mass.  Moreover, if $k=1$ and the initial data is
$$\tilde\varphi_0=(\pi h)^{-3/4}\tilde\Pi\chi(\tilde P)(e^{ix\xi_0/h-(x-x_0)^2/2h}\tilde u_1(x)),$$
where $\tilde \Pi:=\mathcal V^*\Pi\mathcal V$ and $\tilde u_1\in \mathrm{Ran}(\mathcal V^*\Pi_0)$, we have the following expansion for the evolution $e^{-it\tilde P/h}$:
$$e^{-it\tilde P/h}\tilde\varphi_0=e^{i\delta_t/h}\sum_{\mu=0}^{3(N-1)}c_{\mu}(t;h)\phi_{\mu,t}\tilde v_{\mu}(x)+\mathcal O(h^{N/4}),$$
where $\delta_t$ and $c_{\mu}(t;h)$ are constants depending on $t$ and $h$, $\phi_{\mu,t}$ is the generalized coherent states centered at the point reached by the classical flow of the symbol $g$ of $G$, and $\tilde v_{\mu}\in C^{\infty}(\mathbb R^3;L^2(\mathbb R^{3N}_y))$.

The difficulty in our construction of $\Pi$ is that when we expand the symbol of the resolvent $(P-z)^{-1}$ in formal power series in $h$ as $\sum_{j=0}^{\infty}q_j(x,\xi;z)h^j$ where $q_j$ are operators on $L^2(\mathbb R^{3N}_y)$, the power of $y$ contained in $q_j(z)$ becomes of higher order as $j$ increases. This difficulty is overcome by the exponential decay of the eighenfunctions of $P_e$ and the fact that the line integral $\hat \pi_j:=\oint _{\Gamma}q_j(z)dz$ where $\Gamma$ is a certain curve can be written as the sum of the terms containing $\Pi_0$. To prove the boundedness of $\chi(P)\hat \Pi_j$ where $\tilde\Pi_j:=Op_h^w(\tilde\pi_j)$, we also use the similar but slightly different expantion of the symbol of $(P-i)^{-1}$.

The result of G\'erard-\L aba \cite{GeLa} mentioned above also indicates that if the total charge is zero and the initial data is in a certain subspace of $L^2(\mathbb R^N)$, the particles can travel across the magnetic field. This result holds without any approximation but in our Born-Oppenheimer framework the localization is clearer, and although G\'erard-\L aba \cite{GeLa} considered the case without external potentials in which the pseudomomentum of the center of mass commutes with the Hamiltonian, our result covers the case with smooth bounded external potentials.

The content of this paper is as follows. In sec. 2 we transform the Hamiltonian into the form without vector potential term of nuclear variable, introduce our assumptions and state our main results. In sec. 3 we prove the main results. In the Appendix we collect the results we need on the pseudodifferential operators with operator valued symbols.

\section{Some preliminaries and main results}\label{secondsec}
We suppose the magnetic field is parallel to the third axis, so that the vector potential is written as
$$A(x)=\begin{pmatrix}
0 &-b &0\\
b &0 &0\\
0 &0 &0
\end{pmatrix}x.$$
where $b>0$ is a constant.
Setting the mass of electrons to 1 we introduce new coordinates $(x,y_2,\dots,y_{N+1})=(x,y)\in \mathbb R^3\times\mathbb R^{3N}$ by setting
\begin{align*}
&x=\frac{1}{M}\left (mx_1+\sum_{i=2}^{N+1}x_i\right ),\ M=m+N\\
&y_i=x_i-x_1,\ 2\leq i\leq N+1.
\end{align*}
Here $M$ is the total mass, $x$ is the position of center of mass and $y_i$ is the position of the electrons relative to the nucleus. With this choice of coordinates, we have the Hamiltonian
$$\tilde P=h^2D_x^2-2h^2e\sum_{i=2}^{N+1}A(y_i)D_x+\sum_{i=2}^{N+1}\tilde L_i(x) ^2+h^2\tilde Q+V(x,y)$$
where $h^2=\frac{1}{M},$ $\tilde L_i(x)=D_{y_i}-eA(y_i+x)$, and writing $f=\sum_{i=2}^{N+1}y_i$, $V(x,y)$ and $\tilde Q=\tilde Q_1+\tilde Q_2$ are as follows.
\begin{align*}
&V(x,y)=\sum_{i=2}^{N+1}V_{1i}(y_i)+\sum_{2\leq i<j}V_{ij}(y_i-y_j)\\
&\qquad\qquad+V_1\left(x-h^2f\right)+\sum_{i=2}^{N+1}V_i\left(x+y_i-h^2f \right),\\
&\tilde Q_1=\frac{1}{1-Nh^2}\left(\sum_{i=2}^{N+1}\tilde L_i(x)\right)^2,\\
&\tilde Q_2=\frac{1}{1-Nh^2}\left[2\sum_{i=2}^{N+1}\tilde L_i(x)\left(\sum_{i=2}^{N+1}eA\left(y_i+h^2f\right)\right)+\left(\sum_{i=2}^{N+1}eA\left(y_i+h^2f\right)\right)^2\right]\\
&\qquad+2\sum_{i=2}^{N+1}e\tilde L_i(x)A(f)+h^2e^2A(f)^2.
\end{align*}

We apply the unitary transformation $\mathcal V:=\exp\left(-ieA(x)\sum_{i=2}^{N+1}y_i\right)$ and its inverse to $\tilde P$ and obtain
\begin{align*}
P&=\mathcal V\tilde P\mathcal V^*\\
&=h^2D_x^2-4h^2e\sum_{i=2}^{N+1}A(y_i)D_x+\sum_{i=2}^{N+1} L_i ^2+h^2Q+V(x,y).
\end{align*}
Here, $L_i=D_{y_i}-eA(y_i)$, and $Q=Q_1+Q_2$, where
\begin{align*}
&Q_1=\frac{1}{1-Nh^2}\left(\sum_{i=2}^{N+1}L_i\right)^2,\\
&Q_2=\frac{1}{1-Nh^2}\left[2\sum_{i=2}^{N+1}L_i\left(\sum_{i=2}^{N+1}eA\left(y_i+h^2f\right)\right)+\left(\sum_{i=2}^{N+1}eA\left(y_i+h^2f\right)\right)^2\right]\\
&\qquad+2\sum_{i=2}^{N+1}eL_iA(f)+2h^2e^2A(f)^2.
\end{align*}
We write 
$$V_0(x,y)=\sum_{i=2}^{N+1}V_{1i}(y_i)+\sum_{2\leq i<j}V_{ij}(y_i-y_j)+V_1(x)+\sum_{i=2}^{N+1}V_i(x+y_i).$$
Note that $V_0(x,y)$ is the zeroth order terms with respect to $h$ of formal taylor expantion of $V(x,y)$.
We regard $P_e(x)=\sum_{i=2}^{N+1}L_i^2+V_0(x,y)$ as the electric Hamiltonian.
\begin{rem}
If $\varphi$ is a solution of $ih\partial_t\varphi=P\varphi$, then $\mathcal V^*\varphi$ is a solution of $ih\partial_t\varphi=\tilde P\varphi$. Since $\mathcal V^*$ is multiplication of the complex number with modulus 1, $\mathcal V^*\varphi$ has the same modulus as $\varphi$.
\end{rem}

We suppose the potentials $V_{ij}$ and $V_i$ satisfy the following assumptions.
\begin{enumerate}
\item[(H1)]
\begin{enumerate}
\item[(i)]$V_{ij}$ are real valued function $\Delta$-bounded with relative bound smaller than $1$.
\item[(ii)]$V_i\in C^{\infty}$ are real valued function and for any $\alpha\in\mathbb N^3$ there exist a constant $C_{\alpha}$ such that 
$$\lvert \partial^{\alpha}V_i(r)\lvert \leq C_{\alpha}.$$
\end{enumerate}
\end{enumerate}
By (H1) $\hat P$ is well defined as a self-adjoint operator with the domain $\{ u\in L^2(\mathbb R^{3(N+1)}):((D_{x_1}-e_1A(x_1))^2+\sum_{i=2}^{N+1}(D_{x_i}-eA(x_i))^2)u\in L^2(\mathbb R^{3(N+1)})\}$, so that $P$ is also self-adjoint with some domain. $P_e(x)$ can also be regarded as a self-adjoint operator with the domain $\{ u\in L^2(\mathbb R^{3N}): \sum_{i=2}^{N+1}L_i^2u\in L^2(\mathbb R^{3N})\}$. We also suppose that
\begin{enumerate}
\item[(H2)]The spectrum $\sigma (P_e(x))$ is the union of two disjoint components $\sigma_j(x),\ j=0,1$, such that $\sigma _0(x)$ is a part of discrete spectrum of $P_e(x)$ with the corresponding subspace of $L^2(\mathbb R^{3N})$ being finite dimentional and there exists a number $d>0$ such that
$$ \inf_{x\in \mathbb R^3}\mathrm{dist} (\sigma_0(x),\sigma_1(x))\geq d.$$
\end{enumerate}

\begin{rem}\label{loop}
By (H2) it is easy to see that there exist a continuous family of loops $\gamma(x)$ which encloses $\sigma_0(x)$ for each $x$ and $\inf_{x\in \mathbb R^3}\mathrm{dist}(\gamma(x),\sigma(P_e(x)))>d'$ for some $d'>0$.
\end{rem}

We denote by $\Pi_0(x)$ the spectral projection of $P_e(x)$ corresponding to $\sigma_0$. We also suppose the following assumption.
\begin{enumerate}
\item[(H3)]$\mathrm{Ran}\Pi_0(x)$ is spaned by a orthonormal basis $(u_1(x,y),\dots,u_k(x,y))\in C^{\infty}(\mathbb R^3;L^2_y)$ such that
$$\int_{\mathbb R^{3N}}\lvert u_i(x,y)\rvert ^2e^{2\alpha \lvert y\rvert}dy<C,$$
where constants $C>0$ and $\alpha>0$ do not depend on $x$.
\end{enumerate}

\begin{rem}
If $V_i(r)\equiv 0,\ 1\leq i\leq N+1$, then $\sigma _1(x)$ and $\Pi_0(x)$ don't depend on $x$, so that the assumptions (H2) and (H3) are obviously satisfied. If the fluctuation of $V_i$ is small enough, by the upper semicontinuity of the spectrum the eigenvalues in $\sigma_0$ do not cross each other, so that the corresponding eigenfunctions are smooth orthonormal basis. Moreover if $V_i$ are periodic, the eigenfunctions decay uniformly with respect to $x$ since $\Omega_{\epsilon}$ in the proof of Agmon \cite[Theorem 4.1]{Ag} can be chosen uniformly in $x$.
\end{rem}

Our main results are concerned with the almost invariant subspace which is close to electronic eigenspace. In the following theorem, $a=\mathcal O(h^{\infty})$ means $a=\mathcal O(h^N)$ for any $N\in\mathbb N$.
\begin{thm}\label{invariant0}
Assume (H1)-(H3) hold true. Then for any $\Phi\in C_0^{\infty}(\mathbb R)$ there exists a orthogonal projection $\Pi(h)$ on $L^2(\mathbb R^{3(N+1)})$ such that
$$\lVert \Pi-\Pi_0\rVert_{L^2(\mathbb R^{3(N+1)})}=\mathcal O(h),$$
and, for any $\chi\in C_0^{\infty}(\mathbb R)$ such that $\chi\phi=\chi$, we have 
$$\lVert \chi(P)[\Pi,P]\rVert_{L^2(\mathbb R^{3(N+1)})}+\lVert [\Pi,P]\chi(P)\rVert_{L^2(\mathbb R^{3(N+1)})}=\mathcal O(h^{\infty}).$$
\end{thm}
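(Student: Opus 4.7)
The plan is to follow the Nenciu/Helffer-Sjöstrand strategy for constructing almost invariant projections by a formal semiclassical expansion of the resolvent symbol, integrated around the contour $\gamma(x)$ of Remark \ref{loop}, with the necessary adaptations for the operator-valued and partially unbounded setting. First I would work entirely at the level of operator-valued Weyl symbols. For $z$ in a complex neighborhood of the region enclosed by $\gamma(x)$ but off $\sigma(P_e(x))$, I would look for a formal symbol
$$q(x,\xi,z;h)\sim\sum_{j\geq 0}q_j(x,\xi;z)h^j,$$
with values in operators on $L^2_y$, satisfying $(p-z)\# q\sim I$ in the Moyal product associated to $P$, where $p$ is the full symbol of $P$. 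The leading term is $q_0=(P_e(x)+\xi^2-z)^{-1}$, and the recursion produces $q_j$ as a finite sum of products in which $q_0$ is sandwiched with $x$-derivatives of $P_e$ and with the subprincipal contributions from $-4he\sum A(y_i)D_x$ and from $h^2 Q$.

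Setting
$$\pi_j(x,\xi):=\frac{1}{2\pi i}\oint_{\gamma(x)}q_j(x,\xi;z)\,dz,$$
gives $\pi_0(x)=\Pi_0(x)$ and, at the formal level, a symbol $\pi\sim\sum\pi_j h^j$ such that $\pi\#\pi\sim\pi$ and $[p,\pi]_\#\sim 0$. The \emph{main obstacle}, already flagged in the introduction, is that each $q_j$ carries an $x$-derivative of $V(x,y)$ translated through factors of $A(y_i)$, so the $q_j$ involve polynomial powers of $y$ which grow with $j$ and are therefore unbounded operators on $L^2_y$. The point is that in the contour integrals defining the $\pi_j$, the structure of $q_0(z)$ forces every surviving product to be flanked on both sides by operators whose ranges or cokernels are controlled by $\Pi_0(x)$. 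Assumption (H3) provides uniform exponential decay of $\Pi_0(x)$ in $y$, which I would use to absorb each polynomial factor $y^\alpha$ in the interior of such a product, producing uniformly bounded $\pi_j(x,\xi)$ with all $(x,\xi)$-derivatives bounded, i.e., genuine operator-valued symbols in a suitable symbol class (those reviewed in the appendix).

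Next I would Borel-sum $\sum\pi_j h^j$ to a true symbol $\pi(x,\xi;h)$ and set $\tilde\Pi:=\mathrm{Op}_h^w(\pi)$. By construction $\tilde\Pi^\ast=\tilde\Pi+\mathcal{O}(h^\infty)$ and $\tilde\Pi^2-\tilde\Pi=\mathcal{O}(h^\infty)$ in operator norm, so the spectrum of $\tilde\Pi$ clusters near $\{0,1\}$ modulo $\mathcal{O}(h^\infty)$, and the Riesz projection
$$\Pi:=\frac{1}{2\pi i}\oint_{|z-1|=1/2}(z-\tilde\Pi)^{-1}\,dz$$
is a genuine orthogonal projection with $\|\Pi-\tilde\Pi\|=\mathcal{O}(h^\infty)$, hence $\|\Pi-\Pi_0\|=\mathcal{O}(h)$.

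The remaining task is the commutator bound. Symbolically $[\pi,p]_\#=\mathcal{O}(h^\infty)$, but because $p$ is unbounded in $\xi$ and, more awkwardly, in $y$, this does not directly give an operator-norm estimate. This is where the cutoff $\chi(P)$ enters: using a Helffer-Sjöstrand almost-analytic extension of $\chi$ and a second, slightly different symbolic expansion of $(P-i)^{-1}$ (the one used in the introduction to control $\chi(P)\hat\Pi_j$) one turns the unbounded factors in $[P,\Pi]$ into factors acting on states that are already localized in energy and hence, via $P_e$, localized in $y$ by the Agmon-type decay. Combining this with the symbolic $\mathcal{O}(h^\infty)$ bound yields the norm estimate $\|\chi(P)[\Pi,P]\|+\|[\Pi,P]\chi(P)\|=\mathcal{O}(h^\infty)$ and finishes the proof. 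I expect the hardest single step to be the $y$-unboundedness handling in the second paragraph, since both the construction of the $\pi_j$ and the final cutoff argument rely on exploiting it in concert with the exponential decay from (H3).
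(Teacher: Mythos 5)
Your strategy is the right one in outline — formal resolvent expansion, contour integration, quantization, Riesz projection — and you correctly identify the $y$-growth problem and its resolution via (H3) together with the fact that the contour integrals force $\Pi_0(x)$ into each surviving product (this is essentially Lemma \ref{bound1}, proved in the paper by iterating the resolvent equation with nested loops $\gamma_1(x),\dots,\gamma_\ell(x)$). However, there is a genuine gap at the Borel-sum step that invalidates the way you pass to a true projection.

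You assert that the $\pi_j(x,\xi)$ are uniformly bounded with bounded derivatives, Borel-sum to a bounded symbol $\pi$, and that $\tilde\Pi=\mathrm{Op}_h^w(\pi)$ satisfies $\tilde\Pi^2-\tilde\Pi=\mathcal O(h^\infty)$ \emph{in operator norm}, so that the Riesz projection around $z=1$ applies. This is not the case: the terms $q_j$ carry factors $\xi^\beta$ with $\lvert\beta\rvert\le j$ coming from $\xi^2$, $-4h^2e\sum A(y_i)\xi$ and the Moyal expansion, and these survive the $z$-integration. What one actually obtains is $\hat\pi_j\in S^j$, i.e., polynomial growth of order $j$ in $\xi$ (Lemma \ref{bound1}). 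Consequently each $\hat\Pi_j=\mathrm{Op}_h^w(\hat\pi_j)$ is \emph{unbounded}, the Borel resummation cannot be performed at the level of the symbols alone, and $\tilde\Pi$ is not a bounded operator — so neither the near-idempotency in norm nor the contour integral $\oint_{\lvert z-1\rvert=1/2}(z-\tilde\Pi)^{-1}dz$ makes sense. Your use of the cutoff only \emph{after} constructing $\Pi$, to prove the commutator estimate, is therefore too late; the cutoff $\Phi(P)$ must be built into the construction of the almost-projection itself. The paper does precisely this: it first shows $\Phi(P)\hat\Pi_j$ is bounded (Lemma \ref{bound2}, using a second expansion of $(P-i)^{-1}$, with $\Phi(P)(P-i)^j$ absorbing the $\xi$-growth and $\Pi_0$ in $\hat\pi_j$ absorbing the $y$-growth), then performs the Borel resummation on the bounded operators $\hat\Pi_j\Phi(P)$ and $\Phi(P)\hat\Pi_j$, assembles the bounded symmetric operator $\hat\Pi_\Phi=\Phi(P)\hat\Pi+(1-\Phi(P))\hat\Pi\Phi(P)+(1-\Phi(P))\Pi_0(1-\Phi(P))$, and only then takes the Riesz projection. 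The commutator bound in the theorem then requires a second cutoff $\chi$ with $\chi\Phi=\chi$; you treat $\chi$ and $\Phi$ as a single cutoff, but they play distinct roles. To repair your proof, move the cutoff into the resummation and into the definition of the operator whose Riesz projection you take, along the lines of Lemma \ref{bound2} and the definition of $\hat\Pi_\Phi$.
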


\begin{thm}\label{evolutionas}
If $\varphi$ is the solution of $ih\partial_t\varphi=P\varphi$ with initial data $\varphi_0$ satisfying $\chi(P)\varphi_0=\varphi_0$ for some $\chi\in C_0^{\infty}(\mathbb R)$ such that $\chi\phi=\chi$, then 
\begin{equation}\label{myeq2.11}
\varphi =e^{-itP_1/h}\Pi\varphi_0+e^{-itP_2/h}(1-\Pi)\varphi_0+\mathcal O(\lvert t\rvert h^{\infty}\lVert \varphi_0\rVert),
\end{equation}
where $P_1:=\Pi P\Pi$ and $P_2=(1-\Pi)P(1-\Pi)$ is self-adjoint on a domain containing $D(P)$.
\end{thm}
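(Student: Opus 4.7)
My plan is a Duhamel argument reducing both propagators to the block-diagonal generator $P_1+P_2$. Because $P_1P_2=\Pi P\Pi(1-\Pi)P(1-\Pi)=0=P_2P_1$, $P_1$ and $P_2$ strongly commute, so $e^{-it(P_1+P_2)/h}=e^{-itP_1/h}e^{-itP_2/h}$; moreover $P_1$ annihilates $\mathrm{Ran}(1-\Pi)$ and $P_2$ annihilates $\mathrm{Ran}\,\Pi$, so applied to $\varphi_0=\Pi\varphi_0+(1-\Pi)\varphi_0$ this yields
$$e^{-it(P_1+P_2)/h}\varphi_0 = e^{-itP_1/h}\Pi\varphi_0 + e^{-itP_2/h}(1-\Pi)\varphi_0.$$
Thus \eqref{myeq2.11} is equivalent to $(e^{-itP/h}-e^{-it(P_1+P_2)/h})\varphi_0 = \mathcal O(|t|h^\infty\|\varphi_0\|)$. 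Self-adjointness of $P_1$ and $P_2$ on domains containing $D(P)$ I would handle as a preliminary by showing $[P,\Pi]$ is bounded on $D(P)$, using the explicit form of $P$ together with $\Pi-\Pi_0=\mathcal O(h)$ of Theorem~\ref{invariant0} and the exponential decay (H3), so that $\Pi$ preserves $D(P)$ and standard perturbation arguments apply.

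Setting $R:=P-P_1-P_2$ and using the identities $(1-\Pi)P\Pi=[P,\Pi]\Pi$ and $\Pi P(1-\Pi)=-[P,\Pi](1-\Pi)$, one obtains $R=[P,\Pi](2\Pi-1)$. Duhamel's formula then gives
$$\bigl(e^{-itP/h}-e^{-it(P_1+P_2)/h}\bigr)\varphi_0 = -\tfrac{i}{h}\int_0^t e^{-i(t-s)P/h}\,R\,e^{-is(P_1+P_2)/h}\varphi_0\,ds,$$
so the theorem reduces to the uniform estimate $\|R\,e^{-is(P_1+P_2)/h}\varphi_0\|=\mathcal O(h^\infty\|\varphi_0\|)$.

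To obtain this I would pick cutoffs $\chi_1,\chi_2\in C_0^\infty(\mathbb R)$ with $\chi\chi_1=\chi$, $\chi_1\chi_2=\chi_1$ and $\chi_2\phi=\chi_2$, so that Theorem~\ref{invariant0} gives $\|[P,\Pi]\chi_j(P)\|=\mathcal O(h^\infty)$ for $j=1,2$. A Helffer--Sj\"ostrand representation of $\chi_j(P)$ combined with $[\Pi,(z-P)^{-1}]=(z-P)^{-1}[\Pi,P](z-P)^{-1}$ yields $\|[\Pi,\chi_j(P)]\|=\mathcal O(h^\infty)$, and applied to $\lambda\mapsto\lambda\chi_j(\lambda)\in C_0^\infty$ it also produces $\|P[\Pi,\chi_j(P)]\|+\|[\Pi,\chi_j(P)]P\|=\mathcal O(h^\infty)$. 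Expanding $[\chi_j(P),P_1]=[\chi_j(P),\Pi]P\Pi+\Pi P[\chi_j(P),\Pi]$, and similarly for $P_2$, then gives $\|[\chi_j(P),P_1]\|+\|[\chi_j(P),P_2]\|=\mathcal O(h^\infty)$; since $\chi_1(P)\varphi_0=\varphi_0$, a further Duhamel in $s$ produces
$$\chi_1(P)e^{-is(P_1+P_2)/h}\varphi_0 = e^{-is(P_1+P_2)/h}\varphi_0 + \mathcal O(h^\infty|s|\|\varphi_0\|).$$

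Splitting $Re^{-is(P_1+P_2)/h}\varphi_0$ via the last display, the piece $R\chi_1(P)e^{-is(P_1+P_2)/h}\varphi_0$ is $\mathcal O(h^\infty\|\varphi_0\|)$ because $R\chi_1(P)=[P,\Pi]\chi_1(P)(2\Pi-1)+[P,\Pi][2\Pi-1,\chi_1(P)]$ and both summands are $\mathcal O(h^\infty)$. The remainder $R(1-\chi_1(P))e^{-is(P_1+P_2)/h}\varphi_0$ is handled by inserting $\chi_2(P)$ ahead of the small vector by the analogous transport, which places that small vector into $\mathrm{Ran}\,\chi_2(P)$, where $R\chi_2(P)=\mathcal O(h^\infty)$ tames the unboundedness of $R$. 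The principal technical obstacle is precisely this last step: because $R$ contains $[P,\Pi]$ with $P$ a second-order differential operator, $R$ is not a priori bounded, so every application of $R$ to an error vector must be preceded by a $\chi_2(P)$-cutoff, and the descending-cutoff machinery rests on the bound $\|P[\Pi,\chi_j(P)]\|=\mathcal O(h^\infty)$, which is where the Helffer--Sj\"ostrand calculus must absorb an extra factor of $P$.
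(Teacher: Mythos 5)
Your overall Duhamel architecture is correct: the identity $e^{-it(P_1+P_2)/h}\varphi_0 = e^{-itP_1/h}\Pi\varphi_0 + e^{-itP_2/h}(1-\Pi)\varphi_0$, the computation $R = P - P_1 - P_2 = [P,\Pi](2\Pi-1)$, and the reduction to bounding $\|R\,e^{-is(P_1+P_2)/h}\varphi_0\|$ are all sound. However, your route is considerably more roundabout than the paper's, and it creates a difficulty the paper never encounters. The paper does not compare $e^{-itP/h}$ with $e^{-it(P_1+P_2)/h}$ at all. It simply applies $\Pi$ to the Schr\"odinger equation $ih\partial_t\varphi = P\varphi$ to get $ih\partial_t(\Pi\varphi) = P_1\Pi\varphi + \Pi P(1-\Pi)\varphi = P_1\Pi\varphi + \Pi[\Pi,P]\varphi$. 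Since $\chi(P)$ commutes with $e^{-itP/h}$ (the \emph{full} propagator, not the block-diagonal one), $\chi(P)\varphi(t) = \varphi(t)$ holds for \emph{all} $t$, so the error term is $\Pi[\Pi,P]\chi(P)\varphi$ and Theorem~\ref{invariant0} bounds it by $\mathcal O(h^\infty\|\varphi_0\|)$ immediately. Differentiating $e^{itP_1/h}\Pi\varphi$ and integrating then gives the $\Pi$-part of \eqref{myeq2.11}; the same for $1-\Pi$; done. No commutator estimates for $\chi_j(P)$ with $\Pi$ or with $P_1+P_2$ are needed, and no domain delicacy about whether $e^{-is(P_1+P_2)/h}$ preserves $D(P)$ arises.

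There is moreover a genuine gap in your version at the step where you claim a Helffer--Sj\"ostrand representation together with $[\Pi,(z-P)^{-1}] = (z-P)^{-1}[\Pi,P](z-P)^{-1}$ yields $\|[\Pi,\chi_j(P)]\| = \mathcal O(h^\infty)$. Theorem~\ref{invariant0} only controls $[\Pi,P]$ \emph{with a cutoff on one side}: $\|[\Pi,P]\chi(P)\|+\|\chi(P)[\Pi,P]\| = \mathcal O(h^\infty)$. In the Helffer--Sj\"ostrand integrand $(z-P)^{-1}[\Pi,P](z-P)^{-1}$, the commutator $[\Pi,P]$ appears \emph{without} a cutoff. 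Inserting $\chi_2(P)+(1-\chi_2(P))$ next to it leaves a remainder $(z-P)^{-1}[\Pi,P](1-\chi_2(P))(z-P)^{-1}$, which is merely bounded (indeed, $[\Pi,P]$ is not even $\mathcal O(1)$-bounded without a cutoff, only relatively $P$-bounded), not $\mathcal O(h^\infty)$, and the $\bar\partial_z\tilde\chi_j(z) = \mathcal O(|\mathrm{Im}\,z|^\infty)$ factor does not save you since the remainder is $\mathcal O(|\mathrm{Im}\,z|^{-1})$, not decaying in $h$. Consequently the cutoff-transport estimate $\|[\chi_j(P),P_1+P_2]\|=\mathcal O(h^\infty)$, which your argument rests on, is not established by the method you describe. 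This is not merely extra work: the entire need to commute $\chi_j(P)$ through $e^{-is(P_1+P_2)/h}$ is an artifact of using a global Duhamel comparison, and the paper's projection argument dissolves it.

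Finally, on self-adjointness: the paper proves $\Pi D(P)\subset D(P)$ via boundedness of $P\Pi(P-i)^{-1}$ (by the technique of Lemma~\ref{bound2}), then uses lower semiboundedness and a Friedrichs extension; your plan of invoking perturbation theory from $\Pi-\Pi_0=\mathcal O(h)$ and (H3) is plausible in spirit but would still need that $D(P)$-preservation step spelled out.
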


We can reduce the evolution $e^{-itP_1}$ to an evolution on the $L^2$ space of only nuclear variables. To state the next result we use h-admissible operators. For the definition of h-admissible operators see the the Appendix.

\begin{thm}\label{reduction}
There exists a h-admissible operator
$$W:L^2(\mathbb R^{3(N+1)})\to (L^2(\mathbb R^3_x))^{\oplus k}$$
with operator valued symbol and $k\times k$ self-adjoint matrix $G$ of h-admissible operators on $L^2(\mathbb R^3_x)$ such that the restriction $U$ of $W$ to $\mathrm{Ran}\Pi$:
$$U:\mathrm{Ran}\Pi \to (L^2(\mathbb R^3_x))^{\oplus k}$$
is a unitary operator which satisfies 
$$UP_1\Pi=GU\Pi,$$
so that $e^{-itP_1/h}\Pi=U^*e^{-itG/h}U\Pi$. The symbol $g(x,\xi)$ of $G$ has the following form:
$$g(x,\xi)=\xi^2I_k+ \mu(x) +\sum_{i\geq 1} h^jg_j(x,\xi),$$
where $\mu (x)$ is the matrix of $\Pi_0(x)P_e(x)$ in $(u_1(x),\dots,u_k(x))$ and $g(x,\xi)$ does not contain vector potential terms.
\end{thm}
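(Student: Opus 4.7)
The plan is to follow the standard reduction procedure of Martinez--Sordoni \cite{MaSo1,MaSo2} and Sordoni \cite{So}, the new point being that the preliminary transformation $\mathcal V$ performed before the statement has already removed the $x$-dependent vector potential, so that only $\xi^2$ appears at principal order. At zeroth order, I would define $W_0\colon L^2(\mathbb R^{3(N+1)})\to (L^2(\mathbb R^3_x))^{\oplus k}$ as the $x$-dependent bounded operator with symbol $w_0(x)=u(x)^*$, where $u(x)\colon \mathbb C^k\to L^2_y$ sends $(c_1,\dots,c_k)$ to $\sum_j c_j u_j(x,\cdot)$. By (H3) and orthonormality, $W_0 W_0^*=I_k$ and $W_0^*W_0=\Pi_0$, so $W_0$ is already the desired intertwiner at leading order.

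Next, I would construct h-admissible corrections $w_j(x,\xi)$, $j\geq 1$, recursively so that $W:=\mathrm{Op}_h^w(w)$ with $w\sim w_0+\sum_{j\geq 1}h^jw_j$ satisfies $W^*W=\Pi+\mathcal O(h^{\infty})$ and $WW^*=I_k+\mathcal O(h^{\infty})$. Expanding the Moyal product, at each order one gets algebraic equations for $w_j$; their solvability follows from the spectral gap in (H2), the obstruction lying in $\mathrm{Ran}(1-\Pi_0(x))$ and being inverted by the reduced resolvent of $P_e(x)-\mu(x)$, exactly as in \cite{MaSo1}. Borel summation in $h$ followed by truncation gives an h-admissible operator $W$. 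The restriction $U:=W|_{\mathrm{Ran}\,\Pi}$ is then unitary modulo $\mathcal O(h^{\infty})$, and multiplying by $(U^*U)^{-1/2}$, well-defined on $\mathrm{Ran}\,\Pi$ for small $h$ since $U^*U=\Pi+\mathcal O(h^{\infty})$ there, produces an exact unitary without leaving the h-admissible class.

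With this $U$ in hand, I set $G:=UP_1U^*$, viewed as an operator on $(L^2(\mathbb R^3_x))^{\oplus k}$; it is automatically self-adjoint, and the intertwining $UP_1\Pi=GU\Pi$ together with $e^{-itP_1/h}\Pi=U^*e^{-itG/h}U\Pi$ are then immediate. That $G$ is h-admissible follows from the composition calculus: both $U$ and $U^*$ are h-admissible, and $P_1=\Pi P\Pi$ is h-admissible when restricted to $\mathrm{Ran}\,\Pi$ because $\Pi$ localises on the range of $\Pi_0$ where the apparently unbounded pieces of $P$ (the factors $A(y_i)$ and the operator $Q$) are tamed by the exponential decay in (H3). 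For the principal symbol, the Moyal expansion yields $g_0(x,\xi)=w_0(x)(\xi^2+P_e(x))w_0(x)^*=\xi^2 I_k+\mu(x)$, the $D_x$'s in $P$ contributing only through $\mathcal O(h)$ commutators. The higher $g_j$ arise only from the Moyal product and from the $h^2 Q$ and $V-V_0$ pieces; since $P$ contains no vector potential depending on $x$, none is introduced in the $g_j$.

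The main technical obstacle is precisely the unboundedness of the operator-valued symbol of $P$ in the $y$-direction, which threatens every step of the composition calculus. To make all compositions rigorous I would, as in the construction of $\Pi$ itself, express $\Pi$ and the correctors $w_j$ as contour integrals of the resolvent symbol over the loop $\gamma(x)$ from Remark \ref{loop}, so that each resulting term carries at least one factor of $\Pi_0(x)$. The uniform exponential decay of the basis $(u_1,\dots,u_k)$ in (H3) then absorbs the polynomial growth in $y$ coming from the $A(y_i)$ factors and from $Q$, producing symbols with values in genuinely bounded operators on $L^2_y$. This same device was already central in Theorem \ref{invariant0}; using it here is what legitimises the h-admissibility and self-adjointness claims and the explicit form of the symbol $g$.
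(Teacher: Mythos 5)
Your proposal is correct in spirit but takes a genuinely different route from the paper, and there are a couple of points you should tighten.

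The paper does not construct the reducing operator $W$ by a recursive symbol calculation followed by a polar normalization. Instead it invokes the explicit Kato--Nagy intertwining operator
\[
\mathcal U:=\bigl(\Pi_0\Pi+(1-\Pi_0)(1-\Pi)\bigr)\bigl(1-(\Pi_0-\Pi)^2\bigr)^{-1/2},
\]
which for $h$ small is an exact unitary mapping $\mathrm{Ran}\,\Pi$ onto $\mathrm{Ran}\,\Pi_0$, and then sets
\[
W\psi:=(\mathcal U\psi,u_1(x,\cdot))_{L^2_y}\oplus\dotsm\oplus(\mathcal U\psi,u_k(x,\cdot))_{L^2_y}.
\]
This gives the identities $W^*W=\mathcal U^*\Pi_0\mathcal U=\Pi$ and $WW^*=1$ exactly, not merely modulo $\mathcal O(h^\infty)$; the h-admissibility of $\mathcal U$ (and hence of $W$) is then deduced from the h-admissibility of $\Pi$, which is itself obtained from the contour representation of $\Phi'(P)$ and the expansion of the resolvent symbol as in Lemma~\ref{bound2}. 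What the paper's route buys is that no new recursion has to be set up and no compatibility between the Moyal expansion of $W^*W$ and the already-fixed expansion of $\Pi$ has to be verified: once $\Pi$ is h-admissible, $\mathcal U$ comes for free from functional calculus. Your route is feasible in principle -- building $W$ order by order from $w_0(x)=u(x)^*$ and imposing $W^*W=\Pi+\mathcal O(h^\infty)$, $WW^*=I_k+\mathcal O(h^\infty)$ -- but you would have to check that the recursive equations are actually solvable against the given symbol expansion of $\Pi$ (the ``obstruction in $\mathrm{Ran}(1-\Pi_0(x))$'' handles only half of each equation; the diagonal block has to be consistent with $\Pi$'s own expansion), and your polar correction $(U^*U)^{-1/2}$ is essentially a symbol-level re-derivation of the same $(1-(\Pi_0-\Pi)^2)^{-1/2}$ factor. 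You do correctly identify the crucial technical device -- the presence of $\Pi_0(x)$ and the exponential decay in (H3) taming the $y$-growth -- which is indeed what makes the whole composition calculus legitimate here.

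One genuine gap: you assert that $G:=UP_1U^*$ is ``automatically self-adjoint.'' This is not automatic, because $P_1=\Pi P\Pi$ is a priori only symmetric on $D(P)$. The paper first shows (using boundedness of $P\Pi(P-i)^{-1}$ and lower semiboundedness of $P$) that $P_1$ has a self-adjoint (Friedrichs) realization, and then, in the proof of Theorem~\ref{reduction}, proves separately that $P_1$ restricted to $\mathrm{Ran}\,\Pi$ with domain $\mathrm{Ran}\,\Pi\cap D(P_1)$ is self-adjoint as an operator on $\mathrm{Ran}\,\Pi$; only then does self-adjointness of $G$ on $U(\mathrm{Ran}\,\Pi\cap D(P_1))$ follow from the exact unitarity of $U$. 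You should include this domain argument, or at least observe that $P_1$ is self-adjoint on $\mathrm{Ran}\,\Pi$, before transporting it by $U$.
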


\begin{rem}
When more than one nucleus exists, these results with the operator $G$ without vector potential term does not seem to hold because each electron is possessed by more than one nucleus at the same time and the electrons around each nucleus does not necessarily offset the influence of magnetic fields.
\end{rem}

The next result is concerned with more explicit expression of the solution of the Schr\"odinger equation for a special initial data. Let us write $\tilde u_j(x):=\mathcal V^*u_j(x)$. Then $(\tilde u_1(x),\dots,\tilde u_k(x))$ are the orthnormal basis of $\mathrm{Ran}\tilde\Pi_0(x)$ where
$$\tilde \Pi_0(x):=\mathcal V^*\Pi_0(x)\mathcal V.$$
which is the spectral projection of $\tilde P_e(x)=\sum_{i=2}^{N+1}\tilde L_i(x)^2+V_0(x,y)$ corresponding to $\sigma_0(x)$.

Let $\alpha_t=(x_t,\xi_t)$ be the solution of
\begin{equation}\label{myeq2.12}
x_t=\frac{\partial g}{\partial \xi}(x_t,\xi_t),\ \xi_t=\frac{\partial g}{\partial x}(x_t,\xi_t)
\end{equation}
starting from initial data $\alpha_0=(x_0,\xi_0)$. Let $(\eta_n,\zeta_n),\ n=1,2,3$ be the independent solutions of 
\begin{equation}
\begin{pmatrix}
\dot \eta\\
\dot \zeta
\end{pmatrix}
=JM_t
\begin{pmatrix}
\eta\\
\zeta
\end{pmatrix},
\end{equation}
with initial data
$$(\eta_n)_j\vert_{t=0}=\delta_{jn},\ (\zeta_n)_j\vert_{t=0}=i\delta_{jn},$$
where $(\eta_n)_j$ is the jth component of $\eta_n$,
$$J=
\begin{pmatrix} 0&\mathbb I\\ -\mathbb I&0\end{pmatrix},$$
$\mathbb I$ being the unit matrix and $M_t$ is the Hessian of $g$ at $\alpha_t$:
$$(M_t)_{i,j}=\left(\frac{\partial^2g}{\partial\alpha^2}\right)_{i,j}\bigg \vert_{\alpha=\alpha_t}.$$
By Theorem \ref{evolutionas} and Theorem \ref{reduction} as in Martinez-Sordoni \cite[Theorem 11.3]{MaSo2} (see also Combescure-Robert \cite{CoRo}), we have the following theorem.

\begin{thm}\label{coherent}
Let $k=1$ and $\tilde\varphi_0\in L^2(\mathbb R^{3(N+1)})$ be as follows
$$\tilde\varphi_0=(\pi h)^{-3/4}\tilde\Pi\chi(\tilde P)(e^{ix\xi_0/h-(x-x_0)^2/2h}\tilde u_1(x)),$$
where $\tilde \Pi=\mathcal V^*\Pi\mathcal V$, and $\chi=1$ near $\xi_0^2+ \mu(x_0)$.
Then there exists $C>0$ such that for any integer $J\geq1$ one has
\begin{equation}\label{myeq2.13}
e^{-it\tilde P/h}\tilde\varphi_0=e^{i\delta_t/h}\sum_{\mu=0}^{3(J-1)}c_{\mu}(t;h)\phi_{\mu,t}\tilde v_{\mu}(x)+\mathcal O(h^{J/4}),
\end{equation}
where
$$c_{\mu}(t;h)=\sum_{j=0}^{J_{\mu}}h^{j/2}c_{\mu,j}(t),$$
$c_{\mu,j}$ are polynomials with respect to $\partial^{\gamma}g(x_t,\xi_t)$ and $\mathrm{Re}\eta_n,\mathrm{Im}\eta_n,\mathrm{Re}\zeta_n,\mathrm{Im}\zeta_n\ 1\leq n\leq 3$,$\delta_t:=\int_0^t(\dot x_s\xi_s-g(x_s,\xi_s))ds+(x_0\xi_0-x_t\xi_t)/2$ and $\tilde v_{\mu}\in C^{\infty}(\mathbb R^3;L^2_y(\mathbb R^{3N}))$. The estimate is uniform with respect to $(t,h)$ such that $h>0$ is small enough and $t<C^{-1}\ln \frac{1}{h}$. Moreover, the lowest order term with respect to $h$ is $(\pi h)^{-3/4}(e^{ix\xi_t/h-(x-x_t)^2/2h}\tilde u_1(x))$.
\end{thm}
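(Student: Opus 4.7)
The plan is to deduce Theorem \ref{coherent} from Theorems \ref{evolutionas}--\ref{reduction} combined with the Combescure-Robert expansion of the scalar semiclassical propagator $e^{-itG/h}$ acting on a Gaussian coherent state, in the spirit of Martinez-Sordoni \cite[Theorem 11.3]{MaSo2}. First, using $\tilde P=\mathcal V^*P\mathcal V$ I write
$$e^{-it\tilde P/h}\tilde\varphi_0=\mathcal V^*e^{-itP/h}\varphi_0,\qquad\varphi_0:=\mathcal V\tilde\varphi_0.$$
Because $\tilde\Pi=\mathcal V^*\Pi\mathcal V$, $\chi(\tilde P)=\mathcal V^*\chi(P)\mathcal V$, $\mathcal V\tilde u_1=u_1$, and the Gaussian factor depends only on $x$ (hence commutes with the multiplication operator $\mathcal V$), this yields
$$\varphi_0=(\pi h)^{-3/4}\Pi\chi(P)\bigl(e^{ix\xi_0/h-(x-x_0)^2/2h}u_1(x,y)\bigr)\in\mathrm{Ran}\Pi.$$
Moreover the microlocalization of the Gaussian at $(x_0,\xi_0)$ in the nuclear variables together with the assumption $\chi=1$ near $\xi_0^2+\mu(x_0)$ and the form of the principal symbol of $P_1$ give $\chi(P)\varphi_0=\varphi_0+\mathcal O(h^\infty)$.

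Next I apply Theorem \ref{evolutionas} to obtain $e^{-itP/h}\varphi_0=e^{-itP_1/h}\varphi_0+\mathcal O(|t|h^\infty)$, and then Theorem \ref{reduction} to rewrite this as $U^*e^{-itG/h}U\varphi_0$. Since $k=1$, $G$ is a scalar h-admissible operator with principal symbol $\xi^2+\mu(x)$, while the principal symbol of $U$ is the partial inner product $c\mapsto\langle u_1(x,\cdot),c(x,\cdot)\rangle_{L^2_y}$. Hence $U\varphi_0$ coincides with the scalar Gaussian coherent state $(\pi h)^{-3/4}e^{ix\xi_0/h-(x-x_0)^2/2h}$ modulo lower-order symbolic corrections which will be absorbed in the subsequent expansion in powers of $h^{1/2}$.

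At this point the problem has been reduced to a purely scalar semiclassical propagation of a coherent state, for which the Combescure-Robert/Hagedorn expansion (\cite{CoRo,Co}) applies directly. It yields, for every $J\geq 1$ and uniformly on $|t|\leq C^{-1}\ln(1/h)$,
$$e^{-itG/h}U\varphi_0=e^{i\delta_t/h}\sum_{\mu=0}^{3(J-1)}c_\mu(t;h)\phi_{\mu,t}+\mathcal O(h^{J/4}),$$
where $\alpha_t=(x_t,\xi_t)$ is the Hamiltonian flow of $g$ from \eqref{myeq2.12}, the generalized coherent states $\phi_{\mu,t}$ are built from the variational solutions $(\eta_n,\zeta_n)$, and $\delta_t$ is the classical action along the flow; the scalars $c_\mu(t;h)$ have the stated polynomial structure in $\partial^\gamma g(\alpha_t)$ and $(\eta_n,\zeta_n)$, and the Ehrenfest time constraint reflects the exponential growth of the Hessian matrices $M_t$.

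Finally I apply $\mathcal V^*U^*$ termwise. The full operator-valued symbol of $U^*$ has principal part $c\mapsto c\,u_1(x,y)$ and lower-order parts that define elements $v_\mu\in C^\infty(\mathbb R^3;L^2_y)$ after pairing with the Hermite-type factors produced by $\phi_{\mu,t}$; conjugation by the unimodular multiplication operator $\mathcal V^*$ preserves $L^2$-norms and maps $u_1$ and $v_\mu$ to $\tilde u_1$ and $\tilde v_\mu:=\mathcal V^*v_\mu$. Collecting the terms produces \eqref{myeq2.13}, and the lowest order contribution is $(\pi h)^{-3/4}e^{ix\xi_t/h-(x-x_t)^2/2h}\tilde u_1(x)$, as claimed. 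The main technical obstacle is to verify that every remainder entering at the various stages---the $\mathcal O(|t|h^\infty)$ error from Theorem \ref{evolutionas}, the symbolic remainders coming from $U$, $U^*$ and from the cutoff $\chi(P)$, and the $\mathcal O(h^{J/4})$ remainder from Combescure-Robert---combines cleanly on the Ehrenfest time scale; the required uniform bounds come from the commutator estimates of Theorem \ref{invariant0} together with the pseudodifferential calculus for operator-valued symbols collected in the Appendix.
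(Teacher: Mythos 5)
Your proposal follows essentially the same route as the paper: convert $\tilde\varphi_0$ to $\varphi_0=\mathcal V\tilde\varphi_0\in\mathrm{Ran}\Pi$, invoke Theorem \ref{evolutionas} and Theorem \ref{reduction} to reduce $e^{-itP/h}\varphi_0$ to $U^*e^{-itG/h}U\varphi_0$, apply the Combescure--Robert coherent-state expansion to the scalar propagator $e^{-itG/h}$ exactly as in Martinez--Sordoni \cite[Theorem 11.3]{MaSo2}, and finally conjugate back by $\mathcal V^*$ to set $\tilde v_\mu:=\mathcal V^*v_\mu$. The paper compresses the first several steps into a single citation of \cite{MaSo2}, so your more explicit chain is just an unfolding of the same argument.

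There is, however, one genuine gap. You write that $\mathcal V^*$ ``preserves $L^2$-norms and maps $u_1$ and $v_\mu$ to $\tilde u_1$ and $\tilde v_\mu$,'' and then assert $\tilde v_\mu\in C^\infty(\mathbb R^3;L^2_y)$. Boundedness of the unimodular multiplication operator $\mathcal V^*=\exp\bigl(ieA(x)\sum_{i=2}^{N+1}y_i\bigr)$ is not the issue; the issue is smoothness in $x$. Each $x$-derivative of $\mathcal V^*(x)v_\mu(x)$ produces a term of the form
$$-ie\Bigl(A\sum_{i=2}^{N+1}y_i\Bigr)_j\mathcal V^*(x)v_\mu(x),$$
which multiplies by an unbounded function of $y$, and an arbitrary $L^2_y$-valued function would be thrown out of $L^2_y$ after differentiation. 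The paper's proof spends its only nontrivial effort precisely here: since $v_\mu$ is built from the electronic eigenfunction $u_1$, assumption (H3) gives exponential decay in $y$, so multiplication by polynomials in $y$ leaves $L^2_y$ and the displayed derivative lies in $C^0(\mathbb R^3_x;L^2_y)$; iterating gives $\tilde v_\mu\in C^\infty(\mathbb R^3;L^2_y)$. Your proof needs to make this decay explicit rather than appeal to unitarity of $\mathcal V^*$.

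A secondary, smaller point: you claim $\chi(P)\varphi_0=\varphi_0+\mathcal O(h^\infty)$. Since $\varphi_0=\Pi\chi(P)(\cdots)$ and $\Pi$ does not exactly commute with $\chi(P)$, this requires an argument (e.g., via $[\Pi,P]\chi(P)=\mathcal O(h^\infty)$ and almost-analytic extensions) rather than the heuristic appeal to microlocalization. The paper sidesteps this by choosing $\chi$ with $\chi\Phi=\chi$ so that Theorem \ref{evolutionas} applies directly to data in $\mathrm{Ran}\chi(P)$; you may prefer to phrase the reduction at the level of $\Pi\chi(P)$ rather than commuting the cutoff through.
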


\section{Proof of the main results}\label{thirdsec}
The outline of the construction of $\Pi$ can be illustrated as follows. First we obtain the expression of the symbol of the resolvent $(P-z)^{-1}$ of $P$ as formal power siries $\sum_{j=0}^{\infty}q_j(x,\xi;z)h^j$ with respect to $h$. Secondly we integrate each $q_j$ with respect to $z$ along the loop $\Gamma(x,\xi)$ enclosing the set $\{z:z-\xi^2\in\sigma_0(x)\}$ and denote it by $\hat\pi_j(x,\xi)$. Thirdly we quantize $\hat\pi_j$ and multiply the cutoff function $\Phi(P)$ to make each term bounded. Fourthly we make a resummation and symmetrize the operator to obtain a operator $\hat\Pi_{\phi}$. Finally integrating the resolvent of $\hat\Pi_{\phi}$ along the loop enclosing the point $z=1$, we obtain $\Pi$.

We denote by $p(x,\xi;h)=\xi^2+\sum_{i=2}^{N+1} L_i ^2-4h^2e\sum_{i=2}^{N+1}A(y_i)\xi+h^2Q+V(x,y)$ the symbol of $P$ and by $p_0(x,\xi):=\xi^2+\sum_{i=2}^{N+1} L_i ^2+V_0(x,y)$ its principal symbol. Let $\gamma (x)$ be the continuous loop as in Remark \ref{loop}. Let us consider $\Omega:=\lbrace (x,\xi,z)\in\mathbb R^6\times \mathbb C\ ;z-\xi^2\in \gamma(x)\rbrace$. For $(x,\xi,z)\in \Omega$, $p_0(x,\xi)-z$ is invertible and $q_0:=(p_0-z)^{-1}$ is smooth and bounded. We define the symbol
$$r(x,\xi;h,z)=\sum_{j\geq 1}r^j(x,\xi;z)h^j,$$
as in Sordoni \cite{So} (see also Nenciu-Sordoni \cite{NeSo}) by
$$(p(x,\xi;h)-z)\#q_0(x,\xi;z)=1-r(x,\xi;h,z),$$
where 
$$a(x,\xi)\# b(x,\xi)=\sum_{\alpha,\beta}\frac{h^{\lvert \alpha+\beta\rvert}(-1)^{\lvert \alpha \rvert}}{(2i)^{\lvert \alpha+\beta\rvert}\alpha !\beta !}(\partial_x^{\alpha}\partial_\xi^{\beta}a(x,\xi))(\partial_\xi^{\alpha}\partial_x^{\beta}b(x,\xi)),$$
at a formal series level.
We define 
$$q(x,\xi;h,z)=\sum_{j\geq 1}q^j(x,\xi;z)h^j,$$
by
$$q:=q_0+q_0\#\sum_{j\geq1}r^{\#j}.$$
Here $r^{\#j}=r\#\dotsm\# r$ where the number of $\#$ is $j$.
One can check that, for $j\geq 1$, $q_j$ is given by the sum of the terms of the following form
\begin{equation}\label{myeq4.00}
Cq_0\prod_{i=1}^{m}((\partial_x^{\alpha_i}\partial_{\xi}^{\beta_i}s_i)q_0),
\end{equation}
where $C$ is a constant, $1\leq m\leq 2j$, $\lvert \sum_{i=1}^{m}\alpha_i\rvert=\lvert \sum_{i=1}^{m}\beta_i\rvert\leq j$ and $s_i$ is  one of the following

\begin{equation}\label{myeq4.0}
\begin{split}
&\xi^{\alpha},\ \lvert\alpha\rvert=1,\quad \sum_{i=2}^{N+1}A(y_i)\xi,\quad L_iA(y_i),\quad \left(\sum_{i=2}^{N+1}L_i\right)^2,\quad \left(\sum_{i=2}^{N+1}eA(y_i)\right)^2,\\
&\partial_x^{\gamma}V_1(x)\left(\sum_{i=2}^{N+1}y_i\right)^{\gamma}+\sum_{i=2}^{N+1}\partial_x^{\gamma}V_i(x+y_i)\left(\sum_{i=2}^{N+1}y_i\right)^{\gamma},\ \gamma\in\mathbb N^3.
\end{split}
\end{equation}

Since the operators do not commutate with each other, the order of the product is specified as $\prod_{i=1}^{m}((\partial_x^{\alpha_i}\partial_{\xi}^{\beta_i}s_i)q_0)=(\partial_x^{\alpha_1}\partial_{\xi}^{\beta_1}s_1)q_0\dotsm (\partial_x^{\alpha_m}\partial_{\xi}^{\beta_m}s_m)q_0$. 
Let us set
$$\Gamma (x,\xi):=\lbrace z\in\mathbb C;z-\xi^2\in \gamma(x)\rbrace,$$
and let us define
\begin{equation}\label{pi}
\hat\pi_j(x,\xi)=\frac{i}{2\pi}\oint_{\Gamma (x,\xi)}q_j(x,\xi;z)dz.
\end{equation}
Then by the change of the variable $z\to z-\xi^2$, we see $\hat\pi_0(x,\xi)=\Pi_0(x)$.

If a potential $V$ in $\mathbb R^3$ is $\Delta$-bounded with relative bound $\epsilon$, then $V(y_i)$ and $V(y_i-y_j)$ are relatively bounded with respect to $\sum_{i=2}^{N+1}L_i^2$ with relative bound $\epsilon$. Thus we see easily the following lemma.

\begin{lem}\label{bounded}
Let $R_e(x;z)=(P_e(x)-z)^{-1}$ be the resolvent of $P_e$. Then
\begin{itemize}
\item[(i)] $(L_i)_kR_e(x;z)$ and $L_iL_jR_e(x;z)$ are bounded as operators on $L^2(\mathbb R^{3N}_y)$ uniformly with respect to $x$ where $(L_i)_k,\ k=1,2,3$ denotes the $k$th component of $L_i$.
\item[(ii)] $(L_i)_kq_0,L_iL_jq_0,\xi^2 q_0$ and $\lvert\xi\rvert (L_i)_kq_0$ are bounded as operators on $L^2(\mathbb R^{3N}_y)$ uniformly with respect to $x,\xi$.
\end{itemize}
\end{lem}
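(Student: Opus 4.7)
The plan is to deduce both statements from the relative boundedness of $V_0(x,\cdot)$ with respect to the unperturbed magnetic kinetic energy $\sum_{i=2}^{N+1}L_i^2$, together with the uniform spectral gap of Remark~\ref{loop}. The two-body pieces $V_{1i}(y_i)$ and $V_{ij}(y_i-y_j)$ are $\sum_k L_k^2$-bounded with relative bound strictly less than $1$, uniformly in $x$ (this is exactly the remark just preceding the lemma), while the external pieces $V_1(x)+\sum_i V_i(x+y_i)$ are uniformly bounded in $(x,y)$ by (H1)(ii). Hence
\begin{equation*}
\Bigl\|\sum_i L_i^2\,u\Bigr\|\le C\bigl(\|P_e(x)u\|+\|u\|\bigr)
\end{equation*}
uniformly in $x$, and combined with $\|R_e(x;z)\|\le 1/d'$ for $z\in\gamma(x)$ (from (H2) and Remark~\ref{loop}), this yields uniform boundedness of $\bigl(\sum_i L_i^2\bigr)R_e(x;z)$. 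The two operator bounds in~(i) then follow from the positivity inequality $L_i^2\le\sum_k L_k^2$ and
\begin{equation*}
\|(L_i)_k u\|^2\le\langle u,L_i^2 u\rangle\le\|u\|\,\Bigl\|\sum_k L_k^2\,u\Bigr\|,
\end{equation*}
which gives $\|(L_i)_k R_e f\|\le C\|f\|$; and, for the product $L_iL_j$, from the identity $\sum_{i,j}\|L_iL_j u\|^2=\sum_j\langle L_j u,(\sum_i L_i^2)L_j u\rangle$ obtained by self-adjointness of the $L_j$, after which one commutes $L_j$ past $\sum_i L_i^2$ to produce a main term $\|\sum_i L_i^2\,u\|^2$ plus commutator terms that are first order in the $L$'s and thus controlled by the previous line.

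For~(ii) the key observation is the identity $q_0(x,\xi;z)=R_e(x;z-\xi^2)$ valid on $\Omega$, combined with $z-\xi^2\in\gamma(x)$ at distance at least $d'$ from $\sigma(P_e(x))$ uniformly in $x$. The bounds on $(L_i)_k q_0$ and $L_iL_j q_0$ then follow immediately from~(i) with constants depending only on $d'$ and on the relative bound of $V_0$. For the factors $\xi^2$ and $|\xi|$ I would apply the same Cauchy--Schwarz type identity in the functional calculus of $P_e(x)$ to interpolate between the bounded operators $q_0$ and $\bigl(\sum_k L_k^2\bigr)q_0$ in a way that absorbs the $\xi$ prefactors. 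The main technical obstacle is the double-operator estimate for $L_iL_j R_e$: one must verify that $L_i$ and $L_j$ commute for $i\ne j$ and that the commutators $[(L_i)_k,(L_i)_l]$ are bounded $c$-number multiples of the magnetic field strength, so that the symmetrized sum isolates $\|\sum_i L_i^2\,u\|^2$ as the leading contribution with all remaining pieces strictly of lower order; once this algebra is in place the remaining estimates are routine.
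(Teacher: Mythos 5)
Your argument for part~(i) is sound and fills in what the paper leaves implicit: the relative boundedness of $V_0(x,\cdot)$ with respect to $\sum_i L_i^2$ gives $\|\sum_i L_i^2\,u\|\le C(\|P_e(x)u\|+\|u\|)$ uniformly in $x$, and combined with $\|R_e(x;z)\|\le 1/d'$ and the functional-calculus bound $\|(P_e+c)R_e(x;z)\|\le C$ (valid since $z$ ranges over a bounded set) you get $(\sum_i L_i^2)R_e$ bounded, then $(L_i)_k R_e$ via Cauchy--Schwarz and $L_iL_j R_e$ via the symmetrized quadratic-form identity plus the constant commutators $[(L_i)_k,(L_i)_l]=\text{const}\cdot b$. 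That matches what the paper's one-line remark (``Thus we see easily the following lemma'') intends.

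Where you go wrong is in the last two items of~(ii). You set $q_0(x,\xi;z)=R_e(x;z-\xi^2)$ with $z-\xi^2\in\gamma(x)$, i.e.\ you keep the constraint that defines $\Omega$. But under that constraint $\|q_0\|=\|R_e(x;w)\|$ with $w\in\gamma(x)$ bounded, and since $P_e(x)$ is self-adjoint $\|R_e(x;w)\|=1/\mathrm{dist}(w,\sigma(P_e(x)))$ is bounded \emph{below} by a positive constant independent of $\xi$; hence $\|\xi^2 q_0\|\gtrsim \xi^2$ actually diverges, and no ``interpolation between the bounded operators $q_0$ and $(\sum_k L_k^2)q_0$'' can rescue a statement that is false in that reading. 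The $\xi^2 q_0$ and $|\xi|(L_i)_kq_0$ bounds are needed with a \emph{fixed} spectral parameter, $z=i$ (this is precisely how the lemma is invoked in the proof of Lemma~\ref{bound2}, where $q_0'=(p'-i)^{-1}$). With $z=i$ one has $q_0=(\xi^2+P_e(x)-i)^{-1}$, and the spectral theorem for the self-adjoint operator $\xi^2+P_e(x)$ gives the genuine decay $\|q_0\|\lesssim (1+\xi^2+c_0)^{-1}$ where $c_0$ bounds $\sigma(P_e)$ from below. This decay, not an interpolation, is what absorbs the $\xi$ prefactors: $\|\xi^2 q_0\|\lesssim 1$ directly, and for $|\xi|(L_i)_kq_0$ you combine $\|q_0\|\lesssim(1+\xi^2)^{-1}$ with your own quadratic-form inequality $\|(L_i)_kq_0 u\|^2\le\|q_0u\|\,\|\sum_j L_j^2\,q_0u\|$ (the second factor being $O(\|u\|)$ since $P_eq_0=1-(\xi^2-i)q_0$ is uniformly bounded) to get $\|(L_i)_kq_0\|\lesssim(1+|\xi|)^{-1}$. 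So the missing ingredient is the decay of $q_0$ in $\xi$ at fixed $z$, which is incompatible with the $z\in\Gamma(x,\xi)$ reading you adopted.
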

To quantize $\hat\pi_j$ we need to prove that they satisfy the condition of the symbols (see the Appendix).

\begin{lem}\label{bound1}
For any $j\in \mathbb N$ and $r,s\in \mathbb R$, $\hat\pi_j(x,\xi)\in S^j(\mathbb R^3\times \mathbb R^3;\mathcal L(L^{2,r}_y,L^{2,r+s}_y))$ where $L^{2,s}_y:=\{ u: (1+\lvert y\rvert^2)^{s/2}u\in L^2_y\}$.
\end{lem}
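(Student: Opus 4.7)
The plan is to combine the structural formula (\ref{myeq4.00}) for $q_j$ with a residue calculus on the contour $\Gamma(x,\xi)$ so as to express $\hat\pi_j$ as a finite sum of operators each of which contains at least one factor of $\Pi_0(x)$, and then to exploit the exponential decay of the eigenfunctions guaranteed by (H3) to absorb the polynomial-in-$y$ growth coming from the $s_i$ in (\ref{myeq4.0}).

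First I would perform the change of variable $z\mapsto z-\xi^2$ in (\ref{pi}), which turns $\Gamma(x,\xi)$ into the $x$-dependent loop $\gamma(x)$ of Remark \ref{loop} and each factor $q_0$ into $(P_e(x)-z)^{-1}$. Using the Riesz decomposition
\[
(P_e(x)-z)^{-1}=(P_e(x)-z)^{-1}\Pi_0(x)+(P_e(x)-z)^{-1}(1-\Pi_0(x)),
\]
and the fact that $(P_e-z)^{-1}(1-\Pi_0)$ is holomorphic inside $\gamma(x)$, the contour integral kills any product of factors of the second type and, by the residue theorem, every surviving term contains at least one occurrence of $\Pi_0(x)$ placed between the $s_i$, while the remaining $(P_e-z)^{-1}$'s are evaluated on $\gamma(x)$, hence uniformly bounded by the spectral gap $d'$.

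Secondly I would use (H3) to control the weighted norms. Since the basis $u_1(x),\dots,u_k(x)$ decays exponentially in $y$ uniformly in $x$, $\Pi_0(x)$ extends to a bounded operator $L^{2,-r'}_y\to L^{2,s'}_y$ for every $r',s'\in\mathbb R$, with norm uniform in $x$; hence any polynomial-in-$y$ factor produced by $A(y_i)$, by $\sum y_i$, or by the Taylor-type terms $\partial_x^\gamma V_i(x+y_i)(\sum y_j)^\gamma$ in (\ref{myeq4.0}) is absorbed as soon as it meets a $\Pi_0(x)$-factor. The remaining operator factors $L_i$, $L_iL_j$, $\xi$, and the off-spectrum resolvents $(P_e-z)^{-1}$ are handled by Lemma \ref{bounded} uniformly in $(x,\xi)$. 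Combining with the bound $m\le 2j$ on the number of $q_0$'s and $|\sum\alpha_i|=|\sum\beta_i|\le j$ on the derivative count, this yields the pointwise bound appropriate to the class $S^j$.

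Finally, symbol estimates in $(x,\xi)$ are obtained by differentiating under the contour integral. Derivatives of $q_0=(p_0-z)^{-1}$ produce more copies of $q_0$ multiplied by $\partial_{x,\xi}p_0$ (each factor again controlled by Lemma \ref{bounded}); derivatives of the potentials $V_i$ are bounded by (H1)(ii); and $x$-derivatives of $\Pi_0(x)$ and of the $u_i(x)$ preserve the uniform exponential decay by (H3). The main obstacle will be the organization of the residue calculation so that, after one rearranges each term of (\ref{myeq4.00}) modulo commutators $[L_i,y_j]=-i\delta_{ij}$, at least one $\Pi_0(x)$ ends up in the position where it absorbs the highest-order $y$-polynomial (which can be of order $j$ from the Taylor terms in (\ref{myeq4.0})). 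This is exactly the step alluded to in the introduction and it is what forces the use of exponential decay rather than a mere finite weighted estimate.
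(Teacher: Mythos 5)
Your plan captures the paper's two essential ideas -- extract a $\Pi_0(x)$ factor from the contour integral, and then use (H3) together with commutators to absorb the $y$-powers -- but your treatment of the residue step is presentationally different from the paper's. You propose the direct Riesz decomposition $R_e(x;z)=R_e(x;z)\Pi_0(x)+R_e(x;z)(1-\Pi_0(x))$ and the observation that $R_e(x;z)(1-\Pi_0(x))$ is holomorphic inside $\gamma(x)$, so the pure $(1-\Pi_0)$ term drops out after integration. The paper instead introduces auxiliary outer loops $\gamma_1(x),\dots,\gamma_{m+1}(x)$, applies the resolvent equation repeatedly to peel off one resolvent at a time, and uses that the final remainder (with integrand $\prod_n 1/(z-z_n)$, analytic in $z$ inside $\gamma(x)$) vanishes; this yields the canonical form \eqref{myeq4.6} in which $\Pi_0(x)$ appears exactly once per term. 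Your route is more compact and delivers the same conclusion; the paper's iterated construction buys a cleaner bookkeeping of where $\Pi_0$ sits before the commutators are applied. One imprecision worth fixing: the commutator that actually transports the $y$-powers across the off-spectrum resolvents to reach $\Pi_0(x)$ is $[(y_i)_k,R_e(x;z)]=R_e(x;z)[P_e,(y_i)_k]R_e(x;z)=-2iR_e(x;z)(L_i)_kR_e(x;z)$, not $[L_i,y_j]$ by itself; your sketch should state this identity explicitly (and also record that $[y^\alpha,\Pi_0(x)]$ is bounded by (H3)) so the recursion that lowers the $y$-degree is seen to close. As written, the phrase ``rearranges each term modulo commutators $[L_i,y_j]$'' leaves exactly this step -- the one you correctly identify as the main obstacle -- unaccounted for.
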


\begin{proof}
By \eqref{myeq4.00} $q_j$ can be written as the sum of the terms of the following form
\begin{equation}\label{myeq4.01}
Cq_0(\prod_{i=1}^{m}(y^{\alpha_i}a_iq_0))\xi^{\beta},
\end{equation}
where $C$ is a constant, $\alpha_i\in \mathbb N^{3N}$, $\beta\in \mathbb N^3$ and $a_iR_e(x;i)\in \mathcal L(L^2(\mathbb R_y^{3N}))$ uniformly with respect to $x$. It is easy to see $\lvert\beta\rvert\leq j$ from the construction of $q_j$.
By the change of the variable $z\to z-\xi^2$, $\oint_{\Gamma(x,\xi)}q_0\prod_{i=1}^{m}(y^{\alpha_i}a_iq_0)dz$ becomes
$$\oint_{\gamma(x)}R_e(x;z)\prod_{i=1}^{m}(y^{\alpha_i}a_iR_e(x;z))dz.$$
Let $\gamma_1(x)$ be a slightly larger loop enclosing $\gamma(x)$ and $z_1$ is a point on $\gamma_1(x)$. By the resolvent equation we have
\begin{align*}
&\oint_{\gamma(x)}\oint_{\gamma_1(x)}R_e(x;z_1)R_e(x;z)\prod_{i=1}^{m}(y^{\alpha_i}a_iR_e(x;z))dzdz_1\\
&\quad=\oint_{\gamma(x)}\oint_{\gamma_1(x)}\frac{1}{z-z_1}(R_e(x;z)-R_e(x;z_1))\prod_{i=1}^{m}(y^{\alpha_i}a_iR_e(x;z))dzdz_1.
\end{align*}
Since $\gamma_1(x)$ lies outside $\gamma(x)$, we have
\begin{align*}
&\oint_{\gamma(x)}\oint_{\gamma_1(x)}\frac{1}{z-z_1}R_e(x;z)\prod_{i=1}^{m}(y^{\alpha_i}a_iR_e(x;z))dzdz_1\\
&\quad=-2\pi i\oint_{\gamma(x)}R_e(x;z)\prod_{i=1}^{m}(y^{\alpha_i}a_iR_e(x;z))dz.
\end{align*}
Thus we have
\begin{align*}
&\oint_{\gamma(x)}\oint_{\gamma_1(x)}R_e(x;z_1)R_e(x;z)\prod_{i=1}^{m}(y^{\alpha_i}a_iR_e(x;z))dzdz_1\\
&\quad=-2\pi i\oint_{\gamma(x)}R_e(x;z)\prod_{i=1}^{m}(y^{\alpha_i}a_iR_e(x;z))dz.\\
&\qquad -\oint_{\gamma(x)}\oint_{\gamma_1(x)}\frac{1}{z-z_1}R_e(x;z_1)\prod_{i=1}^{m}(y^{\alpha_i}a_iR_e(x;z))dzdz_1.
\end{align*}
Transposing the terms and using $\Pi_0(x)=\frac{i}{2\pi}\oint_{\gamma(x)}(P_e-z)^{-1}dz$, we have
\begin{align*}
&\oint_{\gamma(x)}R_e(x;z)\prod_{i=1}^{m}(y^{\alpha_i}a_iR_e(x;z))dz\\
&\quad=\oint_{\gamma(x)}\Pi_0(x)R_e(x;z)\prod_{i=1}^{m}(y^{\alpha_i}a_iR_e(x;z))dz\\
&\qquad+\frac{i}{2\pi}\oint_{\gamma(x)}\oint_{\gamma_1(x)}\frac{1}{z-z_1}R_e(x;z_1)\prod_{i=1}^{m}(y^{\alpha_i}a_iR_e(x;z))dzdz_1.
\end{align*}
Repeating the same procedure for the rest of $R_e(x;z)$ we have
\begin{equation}\label{myeq4.6}
\begin{split}
&\oint_{\gamma(x)}R_e(x;z)\prod_{i=1}^{m}(y^{\alpha_i}a_iR_e(x;z))dz\\
&\quad=\sum_{\ell=0}^m\left(\frac{i}{2\pi}\right)^{\ell}\oint_{\gamma(x)}\oint_{\gamma_1(x)}\dotsm\oint_{\gamma_\ell(x)}\left(\prod_{n=1}^{\ell}\frac{1}{z-z_n}R_e(x;z_n)(y^{\alpha_n}a_n)\right)\\
&\qquad\cdot\Pi_0(x)\left(\prod_{i=\ell+1}^{m}R_e(x;z)(y^{\alpha_i}a_i)\right)R_e(x;z)dzdz_1\dotsm dz_{\ell}\\
&\qquad+\left(\frac{i}{2\pi}\right)^{m+1}\oint_{\gamma(x)}\oint_{\gamma_1(x)}\dotsm\oint_{\gamma_{m+1}(x)}\prod_{n=1}^{m+1}\frac{1}{z-z_n}\\
&\qquad\quad \cdot\prod_{i=1}^m(R_e(x;z_i)(y^{\alpha_i}a_i))R_e(x;z_{m+1})dzdz_1\dotsm dz_{m+1},
\end{split}
\end{equation}
where $\gamma_i(x)$ are the same loop as $\gamma_1(x)$. Since $\gamma_i(x)$ lies outside $\gamma(x)$ the integration with respect to $z$ in the last term of the right hand side vanishes.

By the assumption (H3)  $y^{\beta}\Pi_0(x)$ is a bounded operator on $L^2(\mathbb R^{3N}_y)$ uniformly for $x$. By $$[(y_i)_k,R_e(x;z)]=R_e(x;z)[P_e,(y_i)_k]R_e(x;z)=-2iR_e(x;z)(L_i)_kR_e(x;z),$$
the remaining term of \eqref{myeq4.6} are the sum of the terms of the following form;
\begin{multline}\label{myeq4.8}
C\oint_{\gamma(x)}\oint_{\gamma_1(x)}\dotsm\oint_{\gamma_{\ell}(x)}\left(\prod_{n=1}^{\ell}\frac{1}{z-z_n}\left(\prod_{j=1}^{j_n}R_e(x;z_n)b_{n,j}\right)\right)\\
\quad\cdot y^{\beta}\Pi_0(x)y^{\delta}\left(\prod_{i=1}^{r}R_e(x;z)c_i\right)R_e(x;z)dzdz_1\dotsm dz_{\ell},
\end{multline}
where $C$ is a constant, $y^{\beta}\Pi_0(x)y^{\delta}$ is bounded uniformly in $x$ and $R_e(x;i)b_{n,j}$, $\ R_e(x;i)c_j$ are bounded uniformly in $x$. Since 
$$(P_e(x)-i)R_e(x;z)=1+(z-i)R_e(x;z),$$
 is uniformly bounded for $x,\ z\in \gamma(x)$ and $(z-z_k)^{-1}$ is uniformly bounded in $z\in \gamma(x),\ z_k \in \gamma_k(x)$, \eqref{myeq4.8} is bounded uniformly in $x$. Thus by \eqref{myeq4.01} $\lVert\hat\pi_j(x,\xi)\rVert_{L^2_y}<C(1+\lvert\xi\rvert)^{j}$ uniformly for $x$, where $C>0$ is a positive constant. In the same way we can see for all $\alpha\in \mathbb N^{3N}$ and $\beta,\delta\in \mathbb N^3$ there exists a constant $C_{\alpha,\beta,\delta}$ such that $\lVert y^{\alpha}\partial_x^{\beta}\partial_{\xi}^{\delta}\hat\pi_j(x,\xi)\rVert_{L^2_y}<C_{\alpha,\beta,\delta}(1+\lvert\xi\rvert)^{j-\lvert\delta\rvert}$ . This completes the proof of the lemma.
\end{proof}

\begin{rem}
Since in Martinez-Sordoni \cite{MaSo2} they use the coordinates where independent variables are positions of the particles, the power of $y$ appears only in the electric Hamiltonian and is not included in $q_j$. On the other hand in our coordinate the power of $y$ is included in $4h^2e\sum_{i=2}^{N+1}A(y_i)D_x, Q$, and the taylor expantion of $V_i$. Thus in our case the power of $y$ in $q_j$ becomes of higher order as $j$ increases, so that the boundedness of $\hat\pi_j$ is not obvious.
\end{rem}

We define
$$\hat\Pi_j=Op_h^w(\hat\pi_j(x,\xi)).$$
The operator $\hat\Pi_j$ is not bounded but by a localization in energy we obtain a bounded operator.

\begin{lem}\label{bound2}
For any $\Phi\in C_0^{\infty}(\mathbb R)$, $\Phi(P)\hat\Pi_j$ is bounded in $L^2(\mathbb R^{3(N+1)})$.
\end{lem}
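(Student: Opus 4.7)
The plan is to trade the $|\xi|^j$ growth of $\hat\pi_j$ against the decay in $\xi$ of a high negative power of a resolvent of $P$, and to let $\Phi(P)$ absorb that resolvent factor. Choose $N\in\mathbb N$ with $2N\geq j$ and factor
$$\Phi(P)\hat\Pi_j = \bigl[\Phi(P)(P-i)^N\bigr]\bigl[(P-i)^{-N}\hat\Pi_j\bigr].$$
Since $\Phi$ has compact support, $\lambda\mapsto\Phi(\lambda)(\lambda-i)^N$ is bounded on $\mathbb R$, so the first bracket is bounded on $L^2(\mathbb R^{3(N+1)})$ by the spectral theorem. It therefore suffices to show that $(P-i)^{-N}\hat\Pi_j$ is bounded.

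For this I would realize $(P-i)^{-1}$ as an $h$-admissible operator with operator-valued symbol of order $-2$ in $\xi$. The construction repeats the one carried out above for $(P-z)^{-1}$, but with the parameter fixed at $z=i$, which lies in the resolvent set of $P$ by self-adjointness: set $\tilde q_0:=(p_0-i)^{-1}$, define $\tilde r$ by $(p-i)\#\tilde q_0=1-\tilde r$, and form $\tilde q:=\tilde q_0+\tilde q_0\#\sum_{k\geq1}\tilde r^{\#k}$ as a formal series. Each $\tilde q_k$ admits the structural description \eqref{myeq4.00}--\eqref{myeq4.0}, and Lemma \ref{bounded} together with the bound $\|\xi^2\tilde q_0\|\leq C$ gives $\tilde q_0\in S^{-2}(\mathbb R^6;\mathcal L(L^2_y))$. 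The higher $\tilde q_k$ carry polynomials in $y$ and must therefore be measured in $S^{-2}(\mathbb R^6;\mathcal L(L^{2,r}_y,L^{2,r-M_k}_y))$, i.e., with a finite loss $M_k$ of weight rather than a gain.

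The key observation is that this loss is harmless when composing with $\hat\pi_j$, because Lemma \ref{bound1} grants $\hat\pi_j\in S^j(\mathbb R^6;\mathcal L(L^{2,r'}_y,L^{2,r'+s}_y))$ for \emph{every} $r',s\in\mathbb R$, so $\hat\Pi_j$ gains arbitrarily much weight. Consequently the Moyal composition of the full symbols of $(P-i)^{-N}$ and $\hat\pi_j$ has order $j-2N\leq 0$ with values in $\mathcal L(L^2_y)$, and the operator-valued Calder\'on--Vaillancourt theorem recalled in the Appendix then yields the boundedness of $(P-i)^{-N}\hat\Pi_j$, and hence of $\Phi(P)\hat\Pi_j$, on $L^2(\mathbb R^{3(N+1)})$.

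The main obstacle is precisely the $y$-polynomial growth in both expansions: on the $\hat\pi_j$ side it is tamed by the factor $\Pi_0(x)$ together with the uniform exponential decay (H3), while on the $(P-i)^{-N}$ side it produces only a finite loss of weight that is absorbed by the arbitrary gain available from Lemma \ref{bound1}. The technical core is the same contour-deformation and commutator manipulation $[(y_i)_k,R_e]=-2iR_e(L_i)_kR_e$ used in \eqref{myeq4.6}--\eqref{myeq4.8}, now carried out at the fixed spectral parameter $z=i$ rather than along the loop $\gamma(x)$, and closed by a Borel resummation in the $h$-admissible calculus of the Appendix.
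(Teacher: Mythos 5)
Your overall strategy---factor $\Phi(P)\hat\Pi_j=\bigl[\Phi(P)(P-i)^N\bigr]\bigl[(P-i)^{-N}\hat\Pi_j\bigr]$, expand the resolvent, and trade the $\xi^j$ growth of $\hat\pi_j$ against resolvent decay while cancelling the finite $y$-weight loss of the resolvent expansion with the arbitrary $y$-weight gain of $\hat\pi_j$ from Lemma \ref{bound1}---is exactly the idea of the paper. However, two steps in your execution break down.

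\textbf{The parametrix is wrong.} You take $\tilde q_0=(p_0-i)^{-1}$. Then the remainder $\tilde r$, defined by $(p-i)\#\tilde q_0=1-\tilde r$, contains the term $-h^2Q_1(p_0-i)^{-1}$ with $Q_1=\tfrac{1}{1-Nh^2}(\sum_i L_i)^2$. This operator is bounded uniformly in $(x,\xi)$, but it does \emph{not} decay in $\xi$: $\lVert(\sum_i L_i)^2(\xi^2+P_e(x)-i)^{-1}\rVert=\mathcal O(1)$, not $\mathcal O(\langle\xi\rangle^{-1})$, because $P_e(\xi^2+P_e-i)^{-1}=1-(\xi^2-i)(\xi^2+P_e-i)^{-1}$ is only $\mathcal O(1)$. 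The $h^2$ prefactor does not help, since the symbol classes $S^m$ take a supremum over $h\in(0,h_0]$. Hence $\tilde r\in S^0$, not $S^{-1}$, the Neumann iterates $\tilde r^{\#k}$ never gain $\xi$-decay, and the scheme cannot cancel the $\xi^j$ growth of $\hat\pi_j$. The paper fixes this by taking $q_0'=(p_0+h^2Q_1-i)^{-1}$: absorbing $h^2Q_1$ (which carries no $y$-weight loss) into the parametrix yields $r'\in S^{-1}\bigl(\mathbb R^6;\mathcal L(L^{2,s}_y,L^{2,s-2}_y)\bigr)$, which is the estimate your argument actually needs.

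\textbf{$(P-i)^{-1}$ is not $h$-admissible in the sense required.} After any finite parametrix expansion the remainder has the form $(P-i)^{-1}Op_h^w(r'^{\#M})$, and because $r'$ loses $y$-weight, $Op_h^w(r'^{\#M})$ is bounded only $L^{2,s}_y\to L^{2,s-2M}_y$, not on $L^2$. So there is no uniformly bounded $R_N$ in the $h$-admissible sense, and a ``Borel resummation'' of the resolvent symbol alone cannot close the argument. What makes the paper's proof work is that the finite Neumann identity
$$(P-i)^{-1}=(P-i)^{-1}Op_h^w\bigl(r'^{\#j}\bigr)+Op_h^w\Bigl(q_0'\#\sum_{i=0}^{j-1}r'^{\#i}\Bigr)$$
is iterated $j$ times so as to express $(P-i)^{-j}\hat\Pi_j$ as a \emph{finite} sum in which each term carries $\hat\Pi_j$ on the right; only then does the arbitrary $y$-weight gain of $\hat\pi_j$ compensate every weight loss, and each term lies in $S^0(\mathbb R^6;\mathcal L(L^2_y))$ so that Theorem \ref{L^2boundedness} applies. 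Composing the formal symbols of $(P-i)^{-N}$ and $\hat\pi_j$, as you propose, does not by itself control the operator $(P-i)^{-N}\hat\Pi_j$; you must keep $\hat\Pi_j$ attached throughout the expansion.
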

\begin{proof}
We can write
\begin{equation}\label{myeq4.80}
\Phi(P)\hat\Pi_j=\Phi(P)(P-i)^j(P-i)^{-j}\hat\Pi_j,
\end{equation}
where $\Phi(P)(P-i)^j$ is bounded. We set
$$p'(x,\xi):=p_0(x,\xi)+h^2Q_1.$$
We define
$$q'_0(x,\xi;h):=(p'(x,\xi)-i)^{-1},$$
and $r'(x,\xi;h)$ by
$$(p(x,\xi;h)-i)\#q'_0(x,\xi;h)=1-r'(x,\xi;h).$$
Since 
$$\partial_{x_k}q'_0(x,\xi;h)=-q'_0(x,\xi;h)\partial_{x_k}V_0q'_0(x,\xi;h),$$
and
$$\partial_{\xi_k}q'_0(x,\xi;h)=-q'_0(x,\xi;h)\xi_kq'_0(x,\xi;h),$$
by Lemma \ref{bounded} we see $r'(x,\xi;h)\in S^{-1}(\mathbb R^3\times\mathbb R^3;\mathcal L(L^{2,s}_y,L^{2,s-2}_y))$, for any $s\in \mathbb R$.
Since
$$(1-r')\#\left(\sum_{i=0}^{j-1}r'^{\#i}\right)=1-r'^{\#j},$$
we have
$$(P-i)Op_h^w\left(q_0'(x,\xi)\#\left(\sum_{i=0}^{j-1}r'^{\#i}\right)\right)=1-Op_h^w\left(r'^{\#j}\right).$$
Multiplying the both sides of the equation by $(P-i)^{-1}$ from left we obtain
$$(P-i)^{-1}=(P-i)^{-1}Op_h^w\left(r'^{\#j}\right)+Op_h^w\left(q_0'(x,\xi)\#\left(\sum_{i=0}^{j-1}r'^{\#i}\right)\right).$$
Hence we have
\begin{equation}\label{myeq4.81}
\begin{split}
(P-i)^{-j}\hat\Pi_j&=\sum_{k=0}^{j-1}(P-i)^{-j+k}Op_h^w\left(r'^{\#j}\right)\left(Op_h^w\left(q_0'(x,\xi)\#\left(\sum_{i=0}^{j-1}r'^{\#i}\right)\right)\right)^k\\
&\quad\cdot\hat\Pi_j+\left(Op_h^w\left(q_0'(x,\xi)\#\left(\sum_{i=0}^{j-1}r'^{\#i}\right)\right)\right)^j\hat\Pi_j.
\end{split}
\end{equation}
Since for any $r\in\mathbb R$
$$r'^{\#j}\#\left (q_0'(x,\xi)\#\left(\sum_{i=0}^{j-1}r'^{\#i}\right)\right)^{\#k}\in S^{-j-k}(\mathbb R^3\times \mathbb R^3;\mathcal L(L^{2,r}_y,L^{2,r-2j-2k}_y)),$$
and
$$\left(q_0'(x,\xi)\#\left(\sum_{i=0}^{j-1}r'^{\#i}\right)\right)^{\#j}\in S^{-j}(\mathbb R^3\times \mathbb R^3;\mathcal L(L^{2,r}_y,L^{2,r-2j}_y)),$$
by Lemma \ref{bound1} we have
$$r'^{\#j}\#\left (q_0'(x,\xi)\#\left(\sum_{i=0}^{j-1}r'^{\#i}\right)\right)^{\#k}\#\hat \pi_j\in S^{0}(\mathbb R^3\times \mathbb R^3;\mathcal L(L^{2}_y)),$$
$$\left(q_0'(x,\xi)\#\left(\sum_{i=0}^{j-1}r'^{\#i}\right)\right)^{\#j}\#\hat \pi_j\in S^{0}(\mathbb R^3\times \mathbb R^3;\mathcal L(L^{2}_y)).$$
Thus by Theorem \ref{L^2boundedness} it follows that
\begin{align*}
Op_h^w\left(r'^{\#j}\right)\left(Op_h^w\left(q_0'(x,\xi)\#\left(\sum_{i=0}^{j-1}r'^{\#i}\right)\right)\right)^k\hat\Pi_j&\in \mathcal L(L^2(\mathbb R^3_x;L^2(\mathbb R^{3N}_y)))\\
&=\mathcal L(L^2(\mathbb R^{3(N+1)})),
\end{align*}
\begin{equation}\label{myeq4.82}
\left(Op_h^w\left(q_0'(x,\xi)\#\left(\sum_{i=0}^{j-1}r'^{\#i}\right)\right)\right)^j\hat\Pi_j\in\mathcal L(L^2(\mathbb R^{3(N+1)})).
\end{equation}
By \eqref{myeq4.80}, \eqref{myeq4.81} and \eqref{myeq4.82}, we have $\Phi(P)\hat\Pi_j\in\mathcal L(L^2(\mathbb R^{3(N+1)}))$
\end{proof}

\begin{rem}
In Martinez-Sordoni \cite{MaSo2} they considered only localized initial data, so that unboundedness of the vector potential term is not important. On the other hand, in our case we do not restrict the initial data and therefore, because of the existence of the vector potential terms in $\hat P$ in \eqref{myeq1.1}, $D_x$ is not $P$-bounded. Thus, to prove $(P-i)^{-j}\hat\Pi_j$ is bounded, we need to expand the resolvent at the expense of the presence of the power of $y$. Since $\hat\pi_j$ includes $\Pi_0(x)$, the power of $y$ does not prevent $(P-i)^{-j}\hat\Pi_j$ from being bounded.
\end{rem}

To construct $\Pi$ we make a resummation. Let $\rho\in C_0^{\infty}(\mathbb R)$ be a function such that $0\leq\rho\leq1$, $\mathrm{supp}\ \rho\subset [-2,2]$ and $\rho=1$ on $[-1,1]$. It is easy to see that $P\hat\Pi_j\Phi(P)$ and $\Phi(P)\hat\Pi_jP$ are bounded in $L^2(\mathbb R^{3(N+1)})$ and there exists a decreasing sequence of positive numbers $(\epsilon_j)_{j\in \mathbb N}$ converging to zero such that for any $j\in\mathbb N$
$$\left(1-\rho\left(\frac{\epsilon_j}{h}\right)\right)(\lVert \hat\Pi_j\Phi(P)\rVert+\lVert \Phi(P)\hat\Pi_j\rVert+\lVert P\hat\Pi_j\Phi(P)\rVert+\lVert \Phi(P)\hat\Pi_jP\rVert)\leq h^{-1}.$$
Let us define
$$\hat\Pi\Phi(P):=\Pi_0\Phi(P)+\sum_{j\geq 1}\left(1-\rho\left(\frac{\epsilon_j}{h}\right)\right)\hat\Pi_j\Phi(P)h^j,$$
$$\Phi(P)\hat\Pi:=\Phi(P)\Pi_0+\sum_{j\geq 1}\left(1-\rho\left(\frac{\epsilon_j}{h}\right)\right)\Phi(P)\hat\Pi_jh^j.$$
Then it is easy to prove that for any $N\in \mathbb N$ there exists $C_N>0$ such that
\begin{equation}\label{myeq4.9}
\left\lVert\hat\Pi\Phi(P)-\sum_{j=0}^Nh^j\hat\Pi_j\Phi(P)\right\rVert+\left\lVert\Phi(P)\hat\Pi-\sum_{j=0}^Nh^j\Phi(P)\hat\Pi_j\right\rVert\leq C_Nh^{N+1}.
\end{equation}
and
\begin{equation}\label{myeq4.91}
\left\lVert P\hat\Pi\Phi(P)-\sum_{j=0}^Nh^jP\hat\Pi_j\Phi(P)\right\rVert+\left\lVert\Phi(P)\hat\Pi P-\sum_{j=0}^Nh^j\Phi(P)\hat\Pi_jP\right\rVert\leq C_Nh^{N+1},
\end{equation}
where we understand $\hat\Pi_0=\Pi_0$.
We define
$$\hat\Pi_{\Phi}:=\Phi(P)\hat\Pi+(1-\Phi(P))\hat\Pi\Phi(P)+(1-\Phi(P))\Pi_0(1-\Phi(P)).$$
Since $\sum_{j\geq 0}q_jh^j$ is formally the symbol of $(P-z)^{-1}$, by \eqref{myeq4.9} and \eqref{myeq4.91} we have the following lemma as in Sordoni \cite{So}.

\begin{lem}\label{invariant}
Let $\chi\in C_0^{\infty}(\mathbb R)$ satisfy $\chi\phi=\chi$. Then for all $N\in\mathbb N$
$$\lVert \chi(P)[\hat\Pi_{\Phi},P]\rVert+\lVert[\hat\Pi_{\Phi},P] \chi(P)\rVert=\mathcal O(h^N),$$
$$\lVert\chi(P)(\hat\Pi_{\Phi}^2-\hat\Pi_{\Phi})\rVert+\lVert(\hat\Pi_{\Phi}^2-\hat\Pi_{\Phi})\chi(P)\rVert=\mathcal O(h^N).$$
\end{lem}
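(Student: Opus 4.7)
The argument parallels Sordoni \cite{So}: combine the formal Moyal-algebra identities satisfied by $\hat\pi:=\sum_{j\geq 0}h^j\hat\pi_j$ with the quantitative norm bounds \eqref{myeq4.9}--\eqref{myeq4.91} to derive estimates modulo $\mathcal O(h^{N+1})$ for every $N$.

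\emph{Step 1 (reduction via the support property).} Since $\chi\phi=\chi$ gives $\Phi(P)\chi(P)=\chi(P)\Phi(P)=\chi(P)$ and $(1-\Phi(P))\chi(P)=\chi(P)(1-\Phi(P))=0$, the last two summands in the definition of $\hat\Pi_\Phi$ are killed by $\chi(P)$, so
\[
\hat\Pi_\Phi\chi(P)=(\Phi(P)\hat\Pi)\chi(P),\qquad \chi(P)\hat\Pi_\Phi=\chi(P)(\hat\Pi\Phi(P)).
\]
Introducing $\tilde\chi(z):=z\chi(z)\in C_0^\infty(\mathbb R)$ (which also satisfies $\tilde\chi\phi=\tilde\chi$), the commutator reads
\[
[\hat\Pi_\Phi,P]\chi(P)=(\Phi(P)\hat\Pi)\tilde\chi(P)-P(\Phi(P)\hat\Pi)\chi(P),
\]
and by \eqref{myeq4.9}--\eqref{myeq4.91} together with the boundedness of $P\Phi(P)$, this equals $\sum_{j=0}^{N}h^j\Phi(P)[\hat\Pi_j,P]\chi(P)+\mathcal O(h^{N+1})$. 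The other commutator, and the squaring $\hat\Pi_\Phi^2-\hat\Pi_\Phi$, reduce analogously to finite combinations of $\hat\Pi_j$'s sandwiched by $\Phi(P)$ and $\chi(P)$ after the same bookkeeping.

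\emph{Step 2 (formal Moyal identities).} The symbol $q(z):=\sum_j h^jq_j$ satisfies $(p-z)\#q(z)=q(z)\#(p-z)=1$ as a formal series in $h$, so $[p,q(z)]_\#=0$ formally. Contour integration along $\Gamma(x,\xi)$, combined with the change of variable $z\to z-\xi^2$ that renders the contour $\xi$-independent, legitimizes differentiation under the integral sign and yields $[p,\hat\pi]_\#=0$ as a formal series. Hence the truncation $\hat\pi^{(N)}:=\sum_{j=0}^{N}h^j\hat\pi_j$ satisfies $[p,\hat\pi^{(N)}]_\#=\mathcal O(h^{N+1})$ as a symbol in the class of Lemma \ref{bound1}. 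Likewise, from $(p-z_1)\#q(z_1)=q(z_2)\#(p-z_2)=1$ one derives the star-resolvent identity $q(z_1)\#q(z_2)=(q(z_1)-q(z_2))/(z_1-z_2)$; integrating against two nested contours $\Gamma\subset\Gamma'$ and applying Cauchy's formula gives $\hat\pi\#\hat\pi=\hat\pi$ formally, and therefore $(\hat\pi^{(N)})^{\#2}-\hat\pi^{(N)}=\mathcal O(h^{N+1})$.

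\emph{Step 3 (quantization).} By the Weyl calculus, $\sum_{j=0}^{N}h^j[\hat\Pi_j,P]$ agrees modulo the required order with $Op_h^w([\hat\pi^{(N)},p]_\#)$, and similarly for the square. Sandwiching by $\Phi(P)$ on one side and $\chi(P)$ on the other---each being a bounded pseudodifferential operator by functional calculus---and using the $L^2$-continuity result of the Appendix, the fact that the symbol is of order $h^{N+1}$ gives an operator of norm $\mathcal O(h^{N+1})$. Combined with Step 1, this proves both estimates of the lemma, $N$ being arbitrary.

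The main technical obstacle is controlling the arbitrarily high powers of $y$ that the symbols $q_j$, and a fortiori their star-products, contain as $j$ grows; without care these fall outside any usable symbol class. The resolution, exactly as in the proof of Lemma \ref{bound1}, is that each $\hat\pi_j$ can be rewritten, via repeated use of the resolvent equation to commute $y^{\alpha}$ through factors of $R_e(x;z)$, as a sum of terms each carrying the factor $\Pi_0(x)$; by (H3) this factor decays exponentially in $y$, keeping the star-products, commutators, and their truncation remainders in symbol classes to which Weyl $L^2$-boundedness applies once the cutoffs $\Phi(P)$ and $\chi(P)$ are in place.
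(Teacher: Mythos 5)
Your proposal follows the same route the paper indicates for this lemma: the paper itself gives no details beyond invoking the resummation estimates \eqref{myeq4.9}--\eqref{myeq4.91}, the formal identity that $\sum_j q_jh^j$ is the Moyal inverse of $p-z$, and a pointer to Sordoni \cite{So}. Your Steps~1--3 reconstruct precisely that argument, and you correctly identify both the two key formal identities ($[p,\hat\pi]_{\#}=0$ and $\hat\pi\#\hat\pi=\hat\pi$, obtained from contour integration of the star-resolvent) and the technical obstruction (unbounded powers of $y$) together with its resolution via the rewriting from Lemma~\ref{bound1} in which every term carries a factor $\Pi_0(x)$.

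One small algebraic slip in Step~1: multiplying $\hat\Pi_\Phi$ on the right by $\chi(P)$ kills only the third summand $(1-\Phi(P))\Pi_0(1-\Phi(P))$; the second summand $(1-\Phi(P))(\hat\Pi\Phi(P))$ survives, since $(\hat\Pi\Phi(P))\chi(P)\neq 0$. The correct identity is
\[
\hat\Pi_\Phi\chi(P)=(\Phi(P)\hat\Pi)\chi(P)+(1-\Phi(P))(\hat\Pi\Phi(P))\chi(P)
=\Pi_0\chi(P)+\sum_{j\geq1}\Bigl(1-\rho\Bigl(\tfrac{\epsilon_j}{h}\Bigr)\Bigr)\hat\Pi_j\chi(P)h^j,
\]
which by \eqref{myeq4.9} still reduces, modulo $\mathcal O(h^{N+1})$, to $\sum_{j=0}^N h^j\hat\Pi_j\chi(P)$; so the conclusion you draw is correct, only the intermediate display is misstated. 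The companion claim $\chi(P)\hat\Pi_\Phi=\chi(P)(\Phi(P)\hat\Pi)$ is exactly right, since here both the second and third summands are annihilated by $\chi(P)(1-\Phi(P))=0$. With this correction noted, the proof matches the paper's intent.
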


Since $\hat\Pi_{\Phi}-\Pi_0=\mathcal O(h)$, for sufficiently small $h$ the spectrum of $\hat\Pi_{\Phi}$ is concentrated near $0$ and $1$, so that the set $\lbrace z\in \mathbb C;\lvert z-1\rvert=1/2\rbrace$ is in the resolvent set of $\hat\Pi_{\Phi}$ for sufficiently small $h$. We define
$$\Pi:=\oint_{\lvert z-1\rvert=1/2}(\hat\Pi_{\Phi}-z)^{-1}dz.$$
Then as in Sordoni \cite{So} we have
$$\Pi-\hat\Pi_{\Phi}=\frac{i}{2\pi}(\hat\Pi_{\Phi}^2-\hat\Pi_{\Phi})\oint_{\lvert z-1\rvert=1/2}(\hat\Pi_{\Phi}-z)^{-1}(2\hat\Pi_{\Phi}-1)(1-\hat\Pi_{\Phi}-z)^{-1}(1-z)^{-1}dz.$$
Theorem \ref{invariant0} follows from this formula and Lemma \ref{invariant}.

\begin{proof}[Proof of Theorem \ref{evolutionas}]
First, we prove that $\Pi P\Pi$ and $(1-\Pi) P(1-\Pi)$ are self-adjoint on a domain containing $D(P)$. Since it can be proved that $P\Pi(P-i)^{-1}$ is bounded as in the proof of Lemma \ref{bound2}, we have $\Pi D(P)\in D(P)$. Hence $\Pi P\Pi$ and $(1-\Pi) P(1-\Pi)$ are defined on $D(P)$. Since $\hat P$ is lower semibounded, $P$ is also lower semibounded. Thus $\Pi P\Pi$ and $(1-\Pi) P(1-\Pi)$ are lower semibounded so that there exist Friedrichs extentions of $\Pi P\Pi$ and $(1-\Pi) P(1-\Pi)$ with their domains containing $D(P)$.

Multiplying both side of $ih\partial_t\varphi=P\varphi$ by $\Pi$ we have
$$ih\partial_t\Pi\varphi=P_1\Pi\varphi+R_1\varphi,$$
where $R_1=\Pi[\Pi,P]\chi(P)=\mathcal O(h^{\infty})$. This equation can be re-written as,
$$ih\partial_t(e^{itP_1/h}\Pi\varphi)=\mathcal O(h^{\infty}\lVert \varphi_0\rVert).$$
and thus integrating from $0$ to $t$ we obtain,
$$\Pi\varphi=e^{-itP_1/h}\Pi\varphi_0+\mathcal O(\lvert t\rvert h^{\infty}\lVert\varphi_0\rVert).$$
In the same way we obtain
$$(1-\Pi)\varphi=e^{-itP_1/h}(1-\Pi)\varphi_0+\mathcal O(\lvert t\rvert h^{\infty}\lVert\varphi_0\rVert).$$
Combining these two equations we obtain \eqref{myeq2.11}.
\end{proof}

\begin{proof}[Proof of Theorem \ref{reduction}]
Since $\Pi-\Pi_0=\mathcal O(h)$, for $h$ small enough the operator
$$\mathcal U:=(\Pi_0\Pi+(1-\Pi_0)(1-\Pi)(1-(\Pi_0-\Pi)^2)^{-1/2}.$$
can be defined. $\mathcal U$ is a unitary operator and it maps $\mathrm{Ran}\Pi$ onto $\mathrm{Ran}\Pi_0$ (see Kato \cite{Ka} Chap.I.4). 

We can write $\Phi(P)\hat\Pi_j=\Phi(P)(P-i)^j(P-i)^{-j}\hat\Pi_j$. By the proof of Lemma \ref{bound2} $(P-i)^{-j}\hat\Pi_j$ is a h-admissible operator with a symbol in $ S^0(\mathbb R^3\times\mathbb R^3;\mathcal L(L^2_y,L^{2,s}_y))$ for any $s\in \mathbb R$. Writing $\Phi'(x):=\Phi(x)(x-i)^j$ we have
\begin{equation}\label{myeq4.92}
\Phi'(P)=\frac{i}{2\pi}\int_{\mathbb C}\bar\partial_z\tilde \chi'(z)(P-z)^{-1}dzd\bar z,
\end{equation}
where $\tilde \Phi'(z)$ is an almost analytic expantion (see e.g., Martinez \cite{Ma}). Expanding $(P-z)^{-1}$ in \eqref{myeq4.92} as in the proof of Lemma \ref{bound2} we can see that $\Phi'(P)$ is h-admissible, so that $\hat\Pi_{\Phi}$ is a h-admissible operator on $L^2(\mathbb R^3_x;L^2_y)$. By the equation
\begin{align*}
\Pi&=\oint_{\lvert z-1\rvert=1/2}(z-\hat\Pi_{\Phi})dz\\
&=\oint_{\lvert z-1\rvert=1/2}(z-\Pi_0)^{-1}\sum_{k=0}^{\infty}((\hat\Pi_{\Phi}-\Pi_0)(z-\Pi_0)^{-1})^kdz,
\end{align*}
we see that $\mathcal U$ is a h-admissible operator on $L^2(\mathbb R^3_x;L^2_y)$ and its princial symbol is $1$.

We define $W$ by
$$W\psi=(\mathcal U\psi,u_1(x,y))_{L^2_y(\mathbb R^{3N})}\oplus\dotsm\oplus(\mathcal U\psi,u_k(x,y))_{L^2_y(\mathbb R^{3N})}.$$
where $(u_1,\dots,u_k)$ is the orthnormal basis of $\mathrm{Ran}\Pi_0$ in (H3). Since $W^*(\alpha_1\oplus\dotsm\oplus\alpha_k)=\mathcal U^*(\alpha_1u_1+\dotsm+\alpha_ku_k)$ for $\alpha_i\in L^2(\mathbb R^3)$ we obtain
$$W^*W=\mathcal U^*\Pi_0\mathcal U=\Pi,\ WW^*=1,$$
which implies the unitarity of the restriction $U$ of $W$ to $\mathrm{Ran}\Pi$.

Difining $G:=UP_1U^*$ it is easy to see that $G$ is a h-admissible operator on $H^2(\mathbb R^3_x;L^2_y)$ and its symbol have the following form:
$$g(x,\xi)=\xi^2I_k+ \mu(x) +\sum_{i\geq 1} h^jg_j(x,\xi).$$

We shall prove $G$ is self-adjoint with the domain $U(\mathrm{Ran}\Pi\cap D(P_1))$. For this purpose we only need to prove $P_1$ is self-adjoint with the domain $\mathrm{Ran}\Pi\cap D(P_1)$ as an operator on $\mathrm{Ran}\Pi$ since $U$ is unitary on $\mathrm{Ran}\Pi$. Let $u,w\in\mathrm{Ran}\Pi$ and suppose $(u,P_1v)=(w,v)$ for all $v\in\mathrm{Ran}\Pi\cap D(P_1)$. Then for any $\tilde v\in D(P_1)$, $(u,P_1\Pi\tilde v)=(w,\Pi\tilde v)$. Moreover since $P_1=\Pi P\Pi$ and $w\in\mathrm{Ran}\Pi$, we have $(u,P_1(1-\Pi)\tilde v)=0$ and $(w,(1-\Pi)\tilde v)=0$. Thus we have $(u,P_1\tilde v)=(w,\tilde v)$ for any $\tilde v\in D(P_1)$. Since $P_1$ is self-adjoint, we see $u\in D(P_1)$ and therefore $u\in\mathrm{Ran}\Pi\cap D(P_1)$. Thus $P_1$ is self-adjoint with the domain $\mathrm{Ran}\Pi\cap D(P_1)$ as an operator on $\mathrm{Ran}\Pi$.

Since $G$ is self-adjoint we can define $e^{-itG}$ and obtain $U^*e^{-itG}U\Pi=e^{-itP_1}\Pi$. This establishes the theorem.
\end{proof}

\begin{proof}[Proof of Theorem \ref{coherent}]
In the same way as Martinez-Sordoni \cite[Theorem 11.3]{MaSo2}, for $\varphi_0=(\pi h)^{-3/4}\Pi\chi(P)(e^{ix\xi_0/h-(x-x_0)^2/2h}u_1(x))$ we obtain the following:
\begin{equation}\label{myeq4.93}
e^{-itP/h}\varphi_0=e^{i\delta_t/h}\sum_{\mu=0}^{3(J-1)}c_{\mu}(t;h)\phi_{\mu,t} v_{\mu}(x)+\mathcal O(h^{J/4}),
\end{equation}
where $\delta_t,c_{\mu}(t;h)$ and $\phi_{\mu,t}$ are satisfying the conditions in Theorem \ref{coherent} and $v_{\mu}(x)$ is satisfying the same condition as $\tilde v_{\mu}(x)$.

Since $\mathcal V^*\varphi_0=\tilde \varphi_0$, multiplying both sides of \eqref{myeq4.93} by $\mathcal V^*$ we have \eqref{myeq2.13} with $\tilde v_{\mu}(x)=\mathcal V^*v_{\mu}(x)$. Since $v_{\mu}(x)$ contains $u_1$, we have $$\partial_{x_j}(\mathcal V^*v_{\mu}(x))=-ie\left(A\sum_{i=2}^{N+1}y_i\right)_j\mathcal V^*v_{\mu}(x)+ \mathcal V^*\partial_{x_j}v_{\mu}(x)\in C^0(\mathbb R^3_x;L_y^2).$$
In the same way $\mathcal V^*v_{\mu}(x)$ is differentiable any number of times which completes the proof.
\end{proof}

\appendix
\section{Pseudodifferential operators with operator valued symbols}\label{fifthsec}
We introduce the classes of operator valued symbols we use.
\begin{dfn}
Let $\mathcal H_1$ and $\mathcal H_2$ be Hilbert spaces and $h_0$ a positive constant. A function $a(x,\xi;h)\in C^{\infty}(\mathbb R^n\times \mathbb R^{\ell}\times (0,h_0];\mathcal L(\mathcal H_1,\mathcal H_2))$ is said to be in $S^m(\mathbb R^n\times \mathbb R^{\ell};\mathcal L(\mathcal H_1,\mathcal H_2))$ if for any $\alpha \in \mathbb N^n$ and $\beta \in \mathbb N^m$ one has
\begin{equation}\label{S^m norm}
\sup_{(x,\xi,h)\in \mathbb R^{n+\ell}\times (0,h_0]}\lVert \partial_x^{\alpha} \partial _{\xi}^{\beta} a(x,\xi ;h)\rVert _{\mathcal L(\mathcal H_1, \mathcal H_2)}(1+\lvert \xi \rvert )^{-m+\lvert \beta \rvert}<\infty.
\end{equation}
\end{dfn}

\begin{dfn}
For $a(x,y,\xi)\in S^m(\mathbb R^n\times \mathbb R^n;\mathcal L(\mathcal H_1,\mathcal H_2))$ and $u\in\mathcal S(\mathbb R^n;\mathcal H_1)$ we set
$$Op_h^w(a)u(x;h) =\frac{1}{(2\pi h)^n}\int\left (\int e^{i(x-y)\xi /h}a(\frac{x+y}{2},\xi)u(y)dy\right )d\xi.$$
$Op_h^w(a)$ is called the semicllasical pseudodifferential operator with operator valued symbol.
\end{dfn}
We can see that $Op_h^w(a)$ is continuous $\mathcal S(\mathbb R^n;\mathcal H_1)\to \mathcal S(\mathbb R^n;\mathcal H_2)$ as in Martinez \cite{Ma}. Since the formal adojoint of $Op_h^w(a)$ is $Op_h^w(a^*)$, we can extend it uniquely to a linear continuous operator $\mathcal S'(\mathbb R^n;\mathcal H_1)\to \mathcal S'(\mathbb R^n;\mathcal H_2)$.

For the composition of pseudodifferential operators we have the following theorem as in the scalar case (see H\"ormander \cite{Ho}).
\begin{thm}\label{composition}
If $a_1\in S^{m_1}(\mathbb R^n\times \mathbb R^n;\mathcal L(\mathcal H_1,\mathcal H_2)),\ a_2\in S^{m_2}(\mathbb R^n\times \mathbb R^n;\mathcal L(\mathcal H_2,\mathcal H_3))$, then as operators on $\mathcal S(\mathbb R^n;\mathcal H_1)$ or $\mathcal S'(\mathbb R^n;\mathcal H_1)$
\begin{equation*}\label{compositioneq}
Op_h^w(a_2)Op_h^w(a_1) =Op_h^w(b),
\end{equation*}
where $b\in  S^{m_1+m_2}(\mathbb R^n\times\mathbb R^n;\mathcal L(\mathcal H_1,\mathcal H_3))$ is given by 
\begin{equation*}\label{compositioneq2}
b(x,\xi)=e^{ih[D_{\eta}D_x-D_yD_\xi]/2}a_2(y,\eta)a_1(x,\xi)\big \vert_{\substack{y=x\\ \eta=\xi}}=:a_1\# a_2,
\end{equation*}
and has the expansion
\begin{equation*}\label{asymptoticcomp}
b(x,\xi)= \sum_{\lvert\alpha+\beta\rvert\leq N}\frac{h^{\lvert \alpha+\beta\rvert}(-1)^{\lvert \alpha \rvert}}{(2i)^{\lvert \alpha+\beta\rvert}\alpha !\beta !}(\partial_x^{\alpha}\partial_\xi^{\beta}a_2(x,\xi))(\partial_\xi^{\alpha}\partial_x^{\beta}a_1(x,\xi))+h^{N+1}r_N(x,\xi),
\end{equation*}
where $r_N\in  S^{m_1+m_2-N-1}(\mathbb R^n\times\mathbb R^n;\mathcal L(\mathcal H_1,\mathcal H_3))$
\end{thm}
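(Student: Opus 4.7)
The plan is to follow the classical derivation of the Moyal product in the Weyl calculus, carried over to operator-valued symbols, exactly as in H\"ormander's treatment of the scalar case cited by the author. First I would compose the two operators directly via their Schwartz kernels: for $u \in \mathcal{S}(\mathbb{R}^n;\mathcal{H}_1)$,
$$Op_h^w(a_2)Op_h^w(a_1)u(x)=\frac{1}{(2\pi h)^{2n}}\iiiint e^{i[(x-y)\xi+(y-z)\eta]/h}\,a_2\!\left(\tfrac{x+y}{2},\xi\right)a_1\!\left(\tfrac{y+z}{2},\eta\right)u(z)\,dy\,d\xi\,dz\,d\eta,$$
which converges as an operator-valued oscillatory integral because $a_1,a_2$ have symbol-type growth, so that the usual integration-by-parts regularization in $\xi,\eta$ (using $(1-\Delta_{y})$ factors, etc.) applies verbatim with $|\cdot|$ replaced by $\|\cdot\|_{\mathcal{L}(\mathcal H_i,\mathcal H_j)}$.

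Next I would put this back into Weyl form. Setting $\zeta=\tfrac{\xi+\eta}{2}$ and $\tau=\xi-\eta$, and shifting $y$ to $w=y-\tfrac{x+z}{2}$, the phase rearranges to $(x-z)\zeta+w\tau$, so the expression equals $Op_h^w(b)u(x)$ with
$$b(X,\Xi)=\frac{1}{(\pi h)^{2n}}\iiiint e^{-2i\sigma((w_1,\tau_1),(w_2,\tau_2))/h}a_2(X+w_1,\Xi+\tau_1)\,a_1(X+w_2,\Xi+\tau_2)\,dw_1\,d\tau_1\,dw_2\,d\tau_2,$$
where $\sigma$ is the standard symplectic form on $\mathbb R^{2n}$. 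Rewriting this integral through Fourier inversion identifies $b$ with the exponential-of-symplectic-derivative form stated in the theorem, namely $b=\exp\bigl(\tfrac{ih}{2}[D_\eta D_x-D_y D_\xi]\bigr)\bigl(a_2(y,\eta)a_1(x,\xi)\bigr)\bigr|_{y=x,\eta=\xi}$; the order of the product is preserved, which is essential since $a_2a_1\in\mathcal{L}(\mathcal H_1,\mathcal H_3)$ while $a_1a_2$ is not even defined.

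To obtain the asymptotic expansion, I would Taylor-expand $a_2(X+w_1,\Xi+\tau_1)a_1(X+w_2,\Xi+\tau_2)$ in $(w_1,\tau_1,w_2,\tau_2)$ at $0$ up to order $N$, with integral remainder. After the Gaussian/oscillatory integrations in the non-polynomial variables collapse via Fourier inversion, each monomial $w_1^{\beta}\tau_1^{\alpha}w_2^{\alpha'}\tau_2^{\beta'}$ produces a term $(\partial_\xi^{\alpha}\partial_x^{\beta}a_2)(\partial_\xi^{\alpha'}\partial_x^{\beta'}a_1)$ only when $(\alpha,\beta)=(\alpha,\beta)$ and $(\alpha',\beta')$ match the multi-index pairing forced by the symplectic exponent; careful bookkeeping of the signs (the $(-1)^{|\alpha|}$ comes from $D_yD_\xi$ versus $D_\eta D_x$) and the factors $(2i)^{-|\alpha+\beta|}$ from the expansion of $e^{ih\cdot/2}$ recovers exactly the stated formula.

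The remaining step, and the only real work, is to show that the remainder $r_N$ lies in $S^{m_1+m_2-N-1}(\mathbb R^n\times\mathbb R^n;\mathcal L(\mathcal H_1,\mathcal H_3))$. For this I would write $r_N$ as an operator-valued oscillatory integral of derivatives of $a_1$ and $a_2$ of total order $N+1$, and estimate using the standard integration-by-parts in the $(w_i,\tau_i)$ variables with the operators $\langle w_i\rangle^{-2M}(1-\Delta_{\tau_i})^M$ and $\langle\tau_i\rangle^{-2M}(1-\Delta_{w_i})^M$, choosing $M$ large enough that all integrals converge absolutely in operator norm. The main obstacle is precisely this uniform bound for operator-valued symbols: one must replace every pointwise absolute value in the scalar proof by $\|\cdot\|_{\mathcal L(\mathcal H_i,\mathcal H_j)}$ and verify that the Leibniz-type inequality $\|\partial^\gamma(a_2a_1)\|_{\mathcal L(\mathcal H_1,\mathcal H_3)}\le\sum_\delta\binom{\gamma}{\delta}\|\partial^\delta a_2\|\,\|\partial^{\gamma-\delta}a_1\|$ together with the symbol estimates in the statement of the theorem delivers the required growth $(1+|\Xi|)^{m_1+m_2-N-1-|\beta|}$ in the $\Xi$-derivatives, uniformly in $h\in(0,h_0]$. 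Once this is in place the $\mathcal S$- and $\mathcal S'$-mapping properties follow from Theorem \ref{composition}'s hypotheses by the same density argument as in the scalar theory.
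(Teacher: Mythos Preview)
Your sketch is a faithful adaptation of the standard scalar Weyl-calculus composition proof (as in H\"ormander) to operator-valued symbols, and the key observation---that every pointwise absolute value is replaced by an operator norm and the Leibniz rule still applies---is exactly the point. Note, however, that the paper does not actually give a proof of this theorem: it simply states the result and refers the reader to H\"ormander with the remark ``as in the scalar case.'' So there is nothing to compare against; your proposal is essentially an elaboration of what the paper's citation would unpack to.
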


The $L^2$-boundedness can be established as in the scalar case.
\begin{thm}\label{L^2boundedness}
Let $a\in S^0(\mathbb R^n\times\mathbb R^n;\mathcal L(\mathcal H))$. Then $Op_h^w(a)$ is continuous on $L^2(\mathbb R^n;\mathcal H)$ and
$$\lVert Op_h^w(a)\rVert_{\mathcal L(L^2(\mathbb R^n;\mathcal H))}\leq C_n\left(\sum_{\lvert \alpha+\beta\rvert\leq M_n}\lVert \partial^{\alpha}_x\partial_{\xi}^{\beta}a\rVert_{L^{\infty}(\mathbb R^n\times \mathbb R^n;\mathcal L(\mathcal H))}\right),$$
where the positive constants $C_n$ and $M_n$ depend only on $n$.
\end{thm}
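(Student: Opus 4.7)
The plan is to carry over the standard proof of the Calder\'on-Vaillancourt theorem (as in Martinez \cite{Ma}) to operator-valued symbols. The only two ingredients used in that proof are (i) the Banach-space structure of the range $\mathcal L(\mathcal H)$, for making sense of derivatives and for integration by parts, and (ii) the Hilbert-space structure of the target $L^2(\mathbb R^n;\mathcal H)$, for applying the Cotlar-Stein lemma. Neither ingredient is sensitive to whether the symbol is scalar- or operator-valued.

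First I would reduce to $h=1$. Performing the change of variable $\xi=h\eta$ in the oscillatory integral defining $Op_h^w(a)$ gives $Op_h^w(a)=Op_1^w(\tilde a)$ with $\tilde a(x,\eta):=a(x,h\eta)$. Splitting the regimes $\lvert\eta\rvert\leq 1/h$ and $\lvert\eta\rvert\geq 1/h$ and using the $S^0$-bound $\lVert\partial^\alpha_x\partial^\beta_\xi a\rVert_{\mathcal L(\mathcal H)}\leq C_{\alpha,\beta}(1+\lvert\xi\rvert)^{-\lvert\beta\rvert}$, one checks that each $S^0$-seminorm of $\tilde a$ is bounded by the corresponding seminorm of $a$ up to a constant independent of $h\in(0,h_0]$. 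So it suffices to prove the estimate at $h=1$.

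Next I would decompose phase space: choose $\{\chi_k\}_{k\in\mathbb Z^{2n}}\subset C_0^\infty(\mathbb R^{2n})$ a smooth partition of unity with $\chi_k$ supported in a unit cube centered at $k=(k_x,k_\xi)\in\mathbb Z^n\times\mathbb Z^n$, and write $a=\sum_k a_k$ with $a_k:=\chi_k a$. Each $A_k:=Op_1^w(a_k)$ has an $\mathcal L(\mathcal H)$-valued Schwartz kernel $K_k(x,y)$, and integration by parts in $\xi$ (via $\langle x-y\rangle^{-2N}(1-\Delta_\xi)^N e^{i(x-y)\xi}=e^{i(x-y)\xi}$) combined with the compact $x$-support of $a_k$ yields
$$\lVert K_k(x,y)\rVert_{\mathcal L(\mathcal H)}\leq C_N\langle x-y\rangle^{-N}\langle x-k_x\rangle^{-N}\langle y-k_x\rangle^{-N},$$
with $C_N$ bounded by finitely many $S^0$-seminorms of $a$. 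A further integration by parts exploiting the phase-space separation between $\mathrm{supp}\,a_k$ and $\mathrm{supp}\,a_{\ell}$ then yields the Cotlar-Stein bounds
$$\lVert A_k^*A_{\ell}\rVert_{\mathcal L(L^2(\mathbb R^n;\mathcal H))}+\lVert A_kA_{\ell}^*\rVert_{\mathcal L(L^2(\mathbb R^n;\mathcal H))}\leq C_N\langle k-\ell\rangle^{-2N}$$
for every $N\in\mathbb N$. Choosing $N>n$ makes the right-hand side summable over $\mathbb Z^{2n}$, and the Cotlar-Stein lemma (a purely Hilbert-space statement, valid for arbitrary bounded operators) delivers $\lVert Op_1^w(a)\rVert=\lVert\sum_k A_k\rVert\leq C\sum_k\langle k\rangle^{-N}<\infty$, bounded by finitely many $S^0$-seminorms of $a$ as required.

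The only step where the operator-valued nature might seem to matter is the integration-by-parts producing the kernel bounds above: one simply substitutes $\lvert\partial^\alpha_x\partial^\beta_\xi a\rvert$ by $\lVert\partial^\alpha_x\partial^\beta_\xi a\rVert_{\mathcal L(\mathcal H)}$ throughout, which is exactly the quantity controlled by the $S^0$-seminorms defined in \eqref{S^m norm}. Thus the scalar proof transfers verbatim, and keeping track of the number of derivatives used in the integrations by parts yields the explicit integer $M_n$ depending only on $n$.
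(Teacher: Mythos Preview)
The paper does not actually prove this theorem: it is stated in the Appendix with only the preceding remark ``The $L^2$-boundedness can be established as in the scalar case.'' Your proposal does precisely that---it runs the standard Cotlar--Stein proof of Calder\'on--Vaillancourt and notes that every step (integration by parts, kernel bounds, almost-orthogonality) uses only the Banach norm $\lVert\cdot\rVert_{\mathcal L(\mathcal H)}$ on the symbol side and the Hilbert structure of $L^2(\mathbb R^n;\mathcal H)$ on the operator side---so you are in full agreement with the paper's one-line justification, merely supplying the omitted details.

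Two minor tidy-ups: for the reduction to $h=1$, after $\xi=h\eta$ you have $\partial_\eta^\beta\tilde a=h^{|\beta|}(\partial_\xi^\beta a)(x,h\eta)$, hence $\lVert\partial_x^\alpha\partial_\eta^\beta\tilde a\rVert_{L^\infty}\leq h^{|\beta|}\lVert\partial_x^\alpha\partial_\xi^\beta a\rVert_{L^\infty}$ directly, so the case split in $|\eta|$ is unnecessary; and the Cotlar--Stein conclusion should read $\sup_k\sum_\ell\langle k-\ell\rangle^{-N}$ rather than $\sum_k\langle k\rangle^{-N}$, though by translation invariance these coincide.
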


We also use the useful notion of h-admissible operators (See Martinez-Sordoni \cite{MaSo2} Appendix).

\begin{dfn}
Let $m\in \mathbb R$ and let $\mathcal H_1$ and $\mathcal H_2$ be Hilbert spaces. An operator $A=A(h):H^m(\mathbb R^n;\mathcal H_1) \to L^2(\mathbb R^n;\mathcal H_2)$ with $h\in (0,h_0]$ is called h-admissible (of degree m) if, for any $N\geq 1$
$$A(h)=\sum_{j=0}^Nh^jOp_h^w(a_j(x,\xi;h))+h^NR_N(h),$$
where $R_N$ is uniformly bounded from $H^m(\mathbb R^n;\mathcal H_1)$ to $L^2(\mathbb R^n;\mathcal H_2)$ for $h\in (0,h_0]$, and, for all $h>0$ small enough, $a_j\in S^m(\mathbb R^{2n};\mathcal L(\mathcal H_1,\mathcal H_2))$.
\end{dfn}

\subsection*{Acknowledgment}
The author would like to express his great appreciation to Professor Yoshio Tsutsumi for his helpful advices and encouragements. The author also shows his deep gratitude to Professor Tadayoshi Adachi for his helpful discussions.

\end{document}